\documentclass{rendiconti}
\RendicontiPagina{1}{Mattia Scomparin}{Mattia Scomparin}

\usepackage{amsmath}
\usepackage{mathrsfs}

\usepackage{amsfonts}
\usepackage{slashed}
\usepackage{xcolor} 
\usepackage{amsmath,amsfonts,amssymb}

\usepackage[pagewise]{lineno}
\usepackage{tikz}
\usetikzlibrary{positioning}

\usepackage{outlines}

\title{Spectral separation of variables from equivalent Lagrangian systems}
\author{Mattia Scomparin}
\dedicated{mattia.scompa@gmail.com}

\summary{
We investigate the dynamical equivalence of quadratic Lagrangians and its
relation to separation of variables. We show that requiring two quadratic
Lagrangians to generate the same Euler--Lagrange equations imposes a
compatibility condition between the kinetic matrices and the potential.
For constant symmetric kinetic matrices, this condition reduces to a
commutation relation with the Hessian of the potential, yielding an
orthogonal spectral decomposition of the configuration space. The equations
of motion then decouple into independent subsystems: generically in
block-separated form, and completely when the spectrum is simple.
Applications include the Sawada--Kotera system and an $n$-dimensional
extension of the H\'{e}non--Heiles model, where the classical integrable
parameter regimes are recovered.
}

\intesta

\begin{document}

\allowbreak
\maketitle


\section{Introduction}
\label{sec:intro}

The separation of variables in Lagrangian systems is a central tool in the analysis of nonlinear equations of motion, allowing for the reduction of coupled dynamics to lower-dimensional subsystems and for the explicit construction of first integrals.

Classical approaches to separability are traditionally formulated within the
Hamilton--Jacobi framework. Additive separation theory originates in
St\"{a}ckel's classification of orthogonal coordinates~\cite{Stackel1893} and in
Levi-Civita's analytic separability conditions~\cite{LeviCivita1904}, which
provide necessary and sufficient differential criteria for separability in
given coordinates. Eisenhart subsequently reformulated the theory in
Riemannian terms, identifying Killing vectors, Killing tensors, and
separable webs as the geometric structures underlying separability for
natural Hamiltonians~\cite{Eisenhart1934}. In this setting, separability is
closely related to the existence of polynomial first integrals of low
degree~\cite{Woodhouse1975,KalninsMiller1980}. Benenti later developed an
intrinsic characterization of Hamilton--Jacobi separability in terms of
Killing tensors and commuting quadratic integrals~\cite{Benenti1997}, with
extensions to systems with scalar and vector
potentials~\cite{BenentiChanuRastelli2001}.

Several developments suggest a corresponding Lagrangian interpretation of
these constructions. Jacobi--Maupertuis reductions naturally transfer
Hamilton--Jacobi separation to tangent-bundle formulations, while null
Hamilton--Jacobi separation can be described through generalized
Levi-Civita conditions and Jacobi--Maupertuis-type
transformations~\cite{BenentiChanuRastelli2005}. Lagrangian approaches based
on adjoint symmetries and invariant volume forms provide constructive
procedures for obtaining separated coordinates and conserved
quantities~\cite{SarletRamos2000}. Related ideas also arise in the inverse
problem of the calculus of variations, where alternative Lagrangians are
characterized by multiplier matrices satisfying Helmholtz-type
conditions~\cite{PrinceSarletThompson2002}. Closely connected results appear
in the theory of geodesically equivalent metrics, where common
unparameterized geodesics lead to commuting integrals under suitable
spectral assumptions~\cite{TopalovMatveev2003}.

Motivated by these developments, we formulate a Lagrangian approach to
separation of variables based on the spectral structure of the kinetic term.
Rather than assuming separability of the potential, we consider dynamical
equivalence between quadratic Lagrangians with distinct constant symmetric
kinetic matrices. Requiring the corresponding Euler--Lagrange equations to
coincide yields an algebraic compatibility condition involving the Hessian
of the potential. This condition induces an orthogonal decomposition of the
configuration space into invariant eigenspaces, with respect to which the
equations of motion decouple into independent subsystems. Simple spectra
produce complete separation, whereas spectral degeneracies lead naturally to
block-separated dynamics.

The resulting framework provides an explicit mechanism for constructing
separated coordinates and quadratic first integrals directly at the
Lagrangian level. The equivalence between quadratic Lagrangians is therefore
not merely redundant, but imposes strong constraints on the admissible
potentials and on the structure of the dynamics itself.

The theory is illustrated through several examples, including the
Sawada--Kotera system and higher-dimensional extensions of the
H\'{e}non--Heiles model, where the spectral condition recovers the classical
integrable parameter regimes. We also address an inverse problem for systems
with non-symmetric constant kinetic matrices, characterizing those that are
dynamically equivalent to completely separated systems. In this setting,
separability is shown to be highly restrictive and to select quadratic
effective potentials.

To our knowledge, the equivalence between quadratic Lagrangians has not
previously been used as a direct mechanism for deriving separability from
algebraic compatibility conditions on the Hessian of the potential.

The paper is organized as follows.
Section~\ref{sec:mmm} derives the equivalence conditions for quadratic
Lagrangians and introduces the spectral separation framework.
Section~\ref{sec:sep} studies the resulting separability properties.
Section~\ref{sec:app} presents explicit examples, including the
Sawada--Kotera system, higher-dimensional H\'{e}non--Heiles models, and an
example in $\mathbb{R}^4$ with a transcendental potential.
Finally, Section~\ref{sec:inv} addresses the inverse characterization of
separable systems with non-symmetric kinetic matrices.

\textbf{Conventions.} Throughout the paper, the configuration space $\mathbb{R}^n$ is endowed with the standard Euclidean inner
product $u\cdot v\equiv\sum_{i=1}^nu_iv_i$ and the associated norm $\|w\|\equiv\sqrt{w\cdot w}$. A superposed dot denotes differentiation with respect to time, namely
$\dot{\phantom{x}}=d/dt$. Here and throughout the paper:
(i) $C^2(\mathbb{R}^n)$ denotes the space of twice continuously differentiable
functions $f:\mathbb{R}^n\to\mathbb{R}$;
(ii) $\mathrm{Sym}(n,\mathbb{R})$ denotes the space of real symmetric
$n\times n$ matrices;
(iii)
$O(n)=\{Q\in\mathbb{R}^{n\times n} : Q^T Q=\mathbb I_n\}$
denotes the orthogonal group.
We also use the Kronecker delta $\delta_{ij}$ and, unless otherwise specified,
summation indices are understood to run from $1$ to $n$.


\section{Quadratic Lagrangians and associated dynamics}
\label{sec:mmm}

Let $q(t)\in\mathbb{R}^n$ denote the configuration vector, and let $\dot q = dq/dt$.
We consider autonomous Lagrangians of the form $\mathcal L(q,\dot q)$, whose
dynamics is governed by the Euler--Lagrange equations
\begin{equation}
\label{eq:euler-lagrange}
\big(\partial_{\dot q_i}\mathcal L\big)^{\cdot}
-
\partial_{q_i}\mathcal L
=
0,
\qquad
i=1,\dots,n.
\end{equation}
Here, $\partial_{q_i}$ and $\partial_{\dot q_i}$ denote partial derivatives with
respect to $q_i$ and $\dot q_i$, respectively.
Along any solution of~\eqref{eq:euler-lagrange}, the associated energy
\begin{equation}
\label{eq:energy}
E(q,\dot q)
\equiv
\partial_{\dot q_i}\mathcal L\,\dot q_i-\mathcal L\,,
\end{equation}
is conserved.

We shall focus on the following two smooth Lagrangians:
\begin{equation}
\label{eq:LtL}
\mathcal{L}(q,\dot q)
=
\frac{1}{2} \sum_{i=1}^n \dot q_i^2 - U(q),
\qquad
\widetilde {\mathcal{L}}(q,\dot q)
=
\frac{1}{2} \sum_{i,j=1}^n A_{ij} \dot q_i \dot q_j - \widetilde U(q),
\end{equation}
where the potentials $U,\widetilde U\in C^2(\mathbb{R}^n)$ and
$A\in\mathrm{Sym}(n,\mathbb{R})$ is a constant, invertible, symmetric matrix. 
This last condition is essential to our study and is not meant to address the
general case of $q$-dependent kinetic matrices. In addition,  Lagrangians \eqref{eq:LtL} are referred to as \emph{quadratic}, since their kinetic term depends quadratically on the velocity $\dot q$.

The Euler--Lagrange equations~\eqref{eq:euler-lagrange} associated with the
Lagrangians~\eqref{eq:LtL} are
\begin{equation}
\label{eq:LtLequations}
\ddot q_i + \partial_{q_i} U(q) = 0,
\qquad
\sum_{j=1}^n A_{ij}\,\ddot q_j + \partial_{q_i} \widetilde U(q) = 0,
\quad
i=1,\dots,n.
\end{equation}
The corresponding energy integrals~\eqref{eq:energy} take the form
\begin{equation}
\label{eq:LtLenergy}
E(q,\dot q)
=
\frac{1}{2} \sum_{i=1}^n \dot q_i^2 + U(q),
\qquad
\widetilde E(q,\dot q)
=
\frac{1}{2} \sum_{i,j=1}^n A_{ij} \dot q_i \dot q_j + \widetilde U(q).
\end{equation}

\begin{remark}[Trivial case]
If $A=\mathbb I_n$, i.e. $A_{ij}=\delta_{ij}$, then $\widetilde{\mathcal L}$ coincides with
$\mathcal L$ up to the replacement of the potential $U$ by $\widetilde U$.
In this case the Euler-Lagrange equations trivially coincide.
\end{remark}


\subsection{Equivalence of quadratic Lagrangian systems}
\label{sec:nnn}

We now address the question of when the two quadratic Lagrangians
\eqref{eq:LtL} generate the same dynamics.
\begin{proposition}
The Lagrangians \eqref{eq:LtL}
generate the same Euler--Lagrange equations \eqref{eq:euler-lagrange} if and only if
\begin{equation}
\label{eq:sameEL}
\partial_{q_i} \widetilde U(q) = \sum_{j=1}^n A_{ij} \, \partial_{q_j} U(q), \quad \forall i=1,\dots,n.
\end{equation}
Moreover, whenever \eqref{eq:sameEL} holds, the two energy functions in \eqref{eq:LtLenergy}
are both first integrals of the common dynamics.
\end{proposition}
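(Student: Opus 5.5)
The plan is to interpret ``generate the same Euler--Lagrange equations'' as ``have the same set of solutions'', that is, the two systems in \eqref{eq:LtLequations} define the same second-order ODE on $\mathbb{R}^n$. Since $A$ is invertible, the second system can be solved for the accelerations, giving $\ddot q=-A^{-1}\nabla\widetilde U(q)$. The first system gives $\ddot q=-\nabla U(q)$. Two explicit second-order systems $\ddot q=F(q)$ and $\ddot q=G(q)$ with continuous right-hand sides have the same solutions if and only if $F\equiv G$.

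\emph{Sufficiency.} Assume \eqref{eq:sameEL}, i.e.\ $\nabla\widetilde U=A\nabla U$. For any solution of the first system, $\ddot q=-\nabla U(q)$. Multiplying by $A$ gives $A\ddot q+\nabla\widetilde U(q)=-A\nabla U+A\nabla U=0$, so $q$ solves the second system. Conversely, multiplying the second system by $A^{-1}$ recovers the first.

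\emph{Necessity.} This is the only step needing care: one must pass from equality of solution sets to a pointwise identity. Fix an arbitrary $q_0\in\mathbb{R}^n$ and take initial data $(q_0,0)$. Since $\nabla U$ is $C^1$, Peano's theorem gives a local $C^2$ solution of $\ddot q=-\nabla U(q)$ with these data; uniqueness is not needed. By assumption this curve also solves the $A$-system. Evaluating both equations at $t=0$ gives
\begin{equation*}
\ddot q(0)=-\nabla U(q_0)\quad\text{and}\quad A\ddot q(0)=-\nabla\widetilde U(q_0),
\end{equation*}
hence $\nabla\widetilde U(q_0)=A\nabla U(q_0)$. Since $q_0$ is arbitrary, \eqref{eq:sameEL} holds everywhere.

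\emph{Energies.} Along solutions of the common dynamics, compute the time derivatives directly. For the first energy,
\begin{equation*}
\dot E=\dot q\cdot(\ddot q+\nabla U)=0.
\end{equation*}
For the second, using the symmetry of $A$,
\begin{equation*}
\dot{\widetilde E}=\sum_{i,j}A_{ij}\dot q_i\ddot q_j+\dot q\cdot\nabla\widetilde U=\dot q\cdot(A\ddot q+\nabla\widetilde U)=0,
\end{equation*}
because the common solutions satisfy the $A$-system. Equivalently, each energy is conserved as the energy \eqref{eq:energy} of its own Lagrangian, and the two Lagrangians share their solutions.
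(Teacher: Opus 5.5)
Your proof is correct and follows the same route as the paper. You compare the two systems after multiplying by $A$ (or $A^{-1}$), and you obtain conservation of $E$ and $\widetilde E$ by differentiating along common solutions; you also rightly make explicit that the symmetry of $A$ is used in computing $\dot{\widetilde E}$. Your one addition is the existence argument (a solution through an arbitrary $q_0$, evaluated at $t=0$), which turns equality of solution sets into the pointwise identity $\nabla\widetilde U=A\nabla U$, a step the paper's ``cast the first equation into the second'' leaves implicit.
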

\begin{proof}
Consider the two Euler--Lagrange equations \eqref{eq:LtLequations}. In order for them to coincide,
it is necessary and sufficient to cast the first equation into the second, which yields
condition \eqref{eq:sameEL}.
With reference to the energies, we show that $\widetilde E$ is conserved along the flow of
$\mathcal{L}$. Therefore, differentiating $\widetilde E$ and evaluating along the equations of motion
\eqref{eq:LtLequations} $\ddot q_j=-\partial_{q_j}U$ generated by $\mathcal{L}$ yields
\begin{equation}
\label{eq:passrr}
\textstyle
\dot{\widetilde E} =
\sum_{i=1}^n\dot q_i
\big(\sum_{j=1}^n A_{ij}\ddot q_j
+\partial_{q_i}\widetilde U
\big)=\sum_{i=1}^n
\dot q_i\big(
-
\sum_{j=1}^n
A_{ij}\partial_{q_j}U
+
\partial_{q_i}\widetilde U\big).
\end{equation}
Hence, $\widetilde E$ is conserved along the flow of $\mathcal{L}$, i.e.\ $\dot{\widetilde E} =0$,
provided that condition \eqref{eq:sameEL} holds.
The same reasoning applies to $E$ along the flow of $\widetilde{\mathcal{L}}$. Indeed,
differentiating $E$ gives $\dot E=\sum_{i=1}^n\dot q_i(\ddot q_i+\partial_{q_i}U)$; using
the equations of motion of $\widetilde{\mathcal{L}}$, namely
$\sum_{j=1}^n A_{ij}\ddot q_j + \partial_{q_i}\widetilde U=0$, together with \eqref{eq:sameEL},
one obtains $\ddot q_i+\partial_{q_i}U=0$ and therefore $\dot E=0$ along the flow of
$\widetilde{\mathcal{L}}$.
\end{proof}
Condition \eqref{eq:sameEL} can be interpreted as a compatibility requirement
between the matrix $A$ and the properties encoded by the potential $U$.
This interpretation is made precise by the following result.
\begin{proposition}
\label{eq:commHess}
Condition \eqref{eq:sameEL} is equivalent
to the requirement that the matrix $A$ commutes with the Hessian $\partial^2_{qq}U$ of $U$, namely
\begin{equation}
\big[\partial^2_{qq}U,A\big]_{ki}\equiv\sum_{j=1}^n\left( \partial^2_{q_k q_j}\!U A_{ji}-A_{ij} \partial^2_{q_j q_k} U\right)=0, \quad \forall i,k=1,\dots,n.
\end{equation}
\end{proposition}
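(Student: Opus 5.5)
The plan is to read condition \eqref{eq:sameEL} as an integrability question. Given $U$ and the constant symmetric matrix $A$, define the vector field
\begin{equation*}
F_i(q)\equiv\sum_{j=1}^n A_{ij}\,\partial_{q_j}U(q),\qquad i=1,\dots,n .
\end{equation*}
Condition \eqref{eq:sameEL} then says exactly that $F=\nabla\widetilde U$ for some $\widetilde U\in C^2(\mathbb{R}^n)$. So the statement should be understood as follows: for fixed $U$ and $A$, a potential $\widetilde U$ satisfying \eqref{eq:sameEL} exists if and only if $[\partial^2_{qq}U,A]=0$. I would state this reading explicitly at the start, since without the existence quantifier on $\widetilde U$ the commutation condition does not involve $\widetilde U$ at all.

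\emph{Necessity.} Suppose $\widetilde U$ satisfies \eqref{eq:sameEL}. Since $U\in C^2$, each $F_i$ is $C^1$, so $\widetilde U$ is in fact $C^2$. By Schwarz's theorem,
\begin{equation*}
\partial_{q_k}\partial_{q_i}\widetilde U=\partial_{q_i}\partial_{q_k}\widetilde U .
\end{equation*}
Because $A$ is constant, the left side equals $\sum_j A_{ij}\,\partial^2_{q_kq_j}U$ and the right side equals $\sum_j A_{kj}\,\partial^2_{q_iq_j}U$. Write $H=\partial^2_{qq}U$. The left side is the $(i,k)$ entry of $AH$, using the symmetry of $H$. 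The right side is the $(k,i)$ entry of $AH$, which equals $(HA)_{ik}$ by the symmetry of both $A$ and $H$. Hence $(AH)_{ik}=(HA)_{ik}$ for all $i,k$, which is the stated commutator condition after relabelling indices and using $A_{ij}=A_{ji}$.

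\emph{Sufficiency.} Conversely, if $[H,A]=0$, the same computation run backwards gives $\partial_{q_k}F_i=\partial_{q_i}F_k$. So $F$ is a $C^1$ closed $1$-form on $\mathbb{R}^n$. Since $\mathbb{R}^n$ is star-shaped, the Poincaré lemma yields a potential, explicitly
\begin{equation*}
\widetilde U(q)=\widetilde U(0)+\int_0^1 F(tq)\cdot q\,dt .
\end{equation*}
This $\widetilde U$ is $C^2$ and satisfies \eqref{eq:sameEL}. It is unique up to an additive constant, which does not affect the Euler--Lagrange equations.

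The only real obstacle is the index bookkeeping in the necessity step. One has to use the symmetry of $A$ and of the Hessian at the right moments to match the $(k,i)$ convention written in the statement; without symmetry of $A$ the condition would instead read $AH=(AH)^T$. A secondary point is the regularity assumption that lets Schwarz's theorem apply; here $C^2$ of $U$ suffices. Simple connectivity of $\mathbb{R}^n$ is needed for the converse, and it is automatic in this setting.
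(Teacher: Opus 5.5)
Your proof is correct and takes essentially the same route as the paper. Necessity comes from the symmetry of the mixed second derivatives of $\widetilde U$, which forces $AH$ to be symmetric and hence $AH=HA$. Sufficiency comes from noting that $F=A\nabla U$ is closed and therefore a gradient on the simply connected (star-shaped) domain $\mathbb{R}^n$.

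One small caution concerns your final matching step. The condition you derive, $(AH)_{ik}=(HA)_{ik}$, is the genuine commutator. The index expression displayed in the statement has $A_{ij}\,\partial^2_{q_jq_k}U$ where $A_{kj}\,\partial^2_{q_jq_i}U$ is meant, and read literally it holds identically for symmetric $A$ and $H$. So you should match the matrix equation $[\partial^2_{qq}U,A]=0$, rather than claim that relabelling indices recovers that displayed formula.
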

\begin{proof}
Differentiating \eqref{eq:sameEL} with respect to $q_k$ yields
\begin{equation}
\textstyle
\partial^2_{q_k q_i} \widetilde U
=
\sum_{j=1}^n A_{ij}\, \partial^2_{q_k q_j} U,
\end{equation}
since $A$ is constant.
Since $\partial^2_{q_k q_i} \widetilde U$ is symmetric in the indices $(i,k)$,
the right-hand side must be symmetric as well, and therefore
\begin{equation}
\textstyle
\sum_{j=1}^n A_{ij}\, \partial^2_{q_k q_j} U
=
\big(\sum_{j=1}^n A_{ij}\, \partial^2_{q_k q_j} U\big)^T
=
\sum_{j=1}^n (\partial^2_{q_k q_j} U)^T A_{ij}^T.
\end{equation}
Using the symmetry of both $\partial^2_{qq}U$ and $A$, this becomes
\begin{equation}
\textstyle
\sum_{j=1}^n (\partial^2_{q_k q_j} U)^T A_{ij}^T
=
\sum_{j=1}^n \partial^2_{q_j q_k} U\, A_{ij}.
\end{equation}
Relabeling the summation indices finally yields
\begin{equation}
\textstyle
\sum_{j=1}^n A_{ij}\, \partial^2_{q_j q_k} U
=
\sum_{j=1}^n \partial^2_{q_k q_j} U\, A_{ji},
\end{equation}
which is equivalent to the commutation condition
$\,[\partial^2_{qq}U, A]=0$.
Conversely, if $U\in C^2(\Omega)$ with $\Omega\subseteq\mathbb{R}^n$ simply connected and $[D^2U(q),A]=0$ for all $q\in\Omega$, then for the vector field $F:=A\nabla U$ one has $\partial_{q_k}F_i=\sum_{j}A_{ij}\partial^2_{q_k q_j}U=\sum_{j}\partial^2_{q_i q_j}U\,A_{jk}=\partial_{q_i}F_k$ for all $i,k$, hence $F$ is conservative and there exists $\widetilde U\in C^2(\Omega)$ such that $\nabla\widetilde U=F=A\nabla U$, i.e.\ \eqref{eq:sameEL} holds.
\end{proof}
We now exploit the symmetry of $A$ to simplify the structure of the system.
Since $A \in \mathrm{Sym}(n,\mathbb{R})$, by the Spectral Theorem its eigenvalues
are real and there exists an orthonormal basis of $\mathbb{R}^n$ formed by
eigenvectors of $A$.

Let $Q$ be the matrix whose columns are such orthonormal eigenvectors.

\begin{remark}
In this case $Q \in O(n)$ because its columns form an
orthonormal basis, hence $Q^T Q=\mathbb I_n$.
Therefore,
\begin{equation}
\label{eq:diag}
Q^T A Q = D,
\qquad
D = \mathrm{diag}
(\underbrace{\lambda_1,\dots,\lambda_1}_{m_1},\dots,
\underbrace{\lambda_r,\dots,\lambda_r}_{m_r}),
\end{equation}
where $\lambda_1,\dots,\lambda_r$ are the distinct eigenvalues of $A$
and $m_1,\dots,m_r$ their algebraic multiplicities.
\end{remark}

This diagonalization naturally suggests introducing adapted coordinates.

\begin{definition}[Spectral coordinates]
We introduce the new spectral coordinates $x=Q^T q$, whose components are
\begin{equation}
\label{eq:xCoord}
x_i
=
\sum_{j=1}^n Q^T_{ij} q_j,
\qquad
i=1,\dots,n.
\end{equation}
\end{definition}
In what follows, we adopt the notation $U(x)=U(q)\vert_{q=Qx}$, and similarly for
$\widetilde U$. The following remarks collect several basic identities that will be repeatedly used in the subsequent computations.
\begin{remark}
\label{rem:calc1a}
In components, since 
\begin{equation}
\textstyle
q_i=\sum_{j=1}^n(Q^T)^{-1}_{ij} x_j=\sum_{j=1}^nQ_{ij} x_j,
\end{equation}
we have that 
\begin{equation}
\textstyle
\dot q_i=\sum_{j=1}^nQ_{ij} \dot x_j,
\quad\mbox{and}\quad
\ddot q_i=\sum_{j=1}^nQ_{ij} \ddot x_j.
\end{equation}
\end{remark}
\begin{remark}
\label{rem:calc2a}
Recalling that $Q$ is not symmetric, we compute
\begin{equation}
\textstyle
\partial_{q_i} x_j
=
\partial_{q_i}\sum_{k} Q^T_{jk} q_k
=
\sum_{k} Q^T_{jk}\, \partial_{q_i} q_k
=
\sum_{k} Q^T_{jk}\, \delta_{ki}
=
Q^T_{ji}.
\end{equation}
Hence,
\begin{equation}
\textstyle
\partial_{q_i}
=
\sum_j (\partial_{q_i} x_j)\, \partial_{x_j}
=
\sum_j Q^T_{ji}\, \partial_{x_j}
=
\sum_j Q_{ij}\, \partial_{x_j}.
\end{equation}
\end{remark}
We can now express the equations of motion in spectral coordinates.
\begin{proposition}
\label{eq:xEL}
 In terms of the spectral coordinates \eqref{eq:xCoord}, the Euler--Lagrange equations \eqref{eq:LtLequations} take the form
\begin{equation}
\label{eq:xEoM}
\ddot{x}_k + \partial_{x_k}U(x) = 0,
\qquad
\lambda_k \ddot{x}_k + \partial_{x_k} \widetilde U(x) = 0,
\qquad k = 1,\dots,n.
\end{equation}
\end{proposition}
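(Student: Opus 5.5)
Here $\lambda_k$ denotes the $k$-th diagonal entry of $D$ in \eqref{eq:diag}, so eigenvalues are repeated according to multiplicity. The plan is a direct change of variables in the two systems \eqref{eq:LtLequations}, written in vector form and using only Remarks~\ref{rem:calc1a} and~\ref{rem:calc2a} together with \eqref{eq:diag}.

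\textbf{Step 1: rewrite the accelerations and gradients.} With $q=Qx$, Remark~\ref{rem:calc1a} gives $\ddot q=Q\ddot x$. Remark~\ref{rem:calc2a} gives $\nabla_q U=Q\nabla_x U$, where $U(x)=U(Qx)$ and the analogous relation holds for $\widetilde U$. Both are pure chain-rule identities for the linear map $q=Qx$, and $C^2$ regularity is preserved.

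\textbf{Step 2: the first system.} The first equation in \eqref{eq:LtLequations} reads $\ddot q+\nabla_qU=0$, hence $Q(\ddot x+\nabla_xU)=0$. Multiplying on the left by $Q^T$ and using $Q^TQ=\mathbb I_n$ yields $\ddot x_k+\partial_{x_k}U=0$ for every $k$.

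\textbf{Step 3: the second system.} The second equation reads $A\ddot q+\nabla_q\widetilde U=0$, hence $AQ\ddot x+Q\nabla_x\widetilde U=0$. Multiplying on the left by $Q^T$ gives $Q^TAQ\,\ddot x+\nabla_x\widetilde U=0$. By \eqref{eq:diag} this is $D\ddot x+\nabla_x\widetilde U=0$, whose $k$-th component is $\lambda_k\ddot x_k+\partial_{x_k}\widetilde U=0$. Since $Q^T$ is invertible, each transformed system is equivalent to the corresponding original one, not merely implied by it.

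\textbf{Main obstacle.} The computation is routine; the only delicate point is index bookkeeping with the non-symmetric $Q$. One must check that $\partial_{q_i}=\sum_jQ_{ij}\partial_{x_j}$, rather than the transpose, is consistent with $\ddot q_i=\sum_jQ_{ij}\ddot x_j$, so that the same matrix $Q$ appears in both terms and a single left multiplication by $Q^T$ removes it. Remark~\ref{rem:calc2a} establishes exactly this, so I would cite it and carry out the argument in matrix form to avoid index errors.
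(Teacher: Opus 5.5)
Your proposal is correct and follows essentially the same route as the paper. You substitute $\ddot q=Q\ddot x$ and $\nabla_q=Q\nabla_x$ (Remarks~\ref{rem:calc1a} and~\ref{rem:calc2a}), strip the common left factor $Q$, and use $Q^TAQ=D$; the only difference is that you multiply by $Q^T$ in matrix form, whereas the paper inserts $\delta_{ip}=\sum_m Q_{im}Q^T_{mp}$ in index notation and then invokes the invertibility of $Q$.
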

\begin{proof}
Consider the relations collected in Remark~\ref{rem:calc1a}.
Starting from the first equation in~\eqref{eq:LtLequations}, we compute
\begin{equation}
0=\textstyle\ddot q_i + \partial_{q_i} U=\sum_{j}Q_{ij} \ddot x_j+\sum_j Q_{ij}\partial_{x_j}U=\sum_{j}Q_{ij} \big(\ddot x_j+\partial_{x_j}U\big).
\end{equation}
Since $Q$ is invertible, the resulting relation implies
$\ddot x_j+\partial_{x_j}U$ for all $j$, which gives the
first equation in~\eqref{eq:xEoM}.

We now turn to the second equation in~\eqref{eq:LtLequations}.
Using again Remark~\ref{rem:calc1a} with Remark~\ref{rem:calc2a}, we obtain
\begin{equation}
\label{eq:pass}
\textstyle
\begin{aligned}
\textstyle
0 &=\textstyle  \sum_{j} A_{ij} \ddot q_j + \partial_{q_i} \widetilde U,\\
&= \textstyle\sum_{j} A_{ij} \sum_{k}Q_{jk} \ddot x_k + \sum_mQ_{im}\partial_{x_m} \widetilde U,\\
&=\textstyle\sum_{j} \big(\sum_p \delta_{ip} A_{pj}\big) \sum_{k}Q_{jk} \ddot x_k + \sum_mQ_{im}\partial_{x_m} \widetilde U,\\
&=\textstyle\sum_{j} \sum_p \big(\sum_mQ_{im}Q^T_{mp}\big) A_{pj} \sum_{k}Q_{jk} \ddot x_k + \sum_mQ_{im}\partial_{x_m} \widetilde U,\\
&= \textstyle \sum_mQ_{im}\big[\sum_k\big(\sum_{j}\sum_pQ^T_{mp}A_{pj}Q_{jk}\big)\ddot x_k+\partial_{x_m} \widetilde U\big],\\
&= \textstyle \sum_mQ_{im}\big(\sum_k\lambda_k \delta_{mk}\ddot x_k+\partial_{x_m} \widetilde U\big),\\
&= \textstyle \sum_mQ_{im}\big(\lambda_m \ddot x_m+\partial_{x_m} \widetilde U\big), \qquad i=1,\dots,n.
\end{aligned}
\end{equation}
In the above computation, the identity
$\delta_{ip}=\sum_m Q_{im}Q^T_{mp}$ is used to factor out the matrix $Q$ and
rearrange the summations. The diagonalization $Q^T A Q=\mathrm{diag}(\lambda_1,
\dots,\lambda_n)$ then yields
$\sum_{j,p} Q^T_{mp}A_{pj}Q_{jk}=\lambda_k\delta_{mk}$, leading to the last line.
Since $Q$ is invertible, the resulting relation of \eqref{eq:pass} implies
$\lambda_m \ddot x_m+\partial_{x_m}\widetilde U=0$ for all $m$, which gives the
second equation in~\eqref{eq:xEoM}.
\end{proof}

\begin{remark} Consider again the relations obtained in Remark \ref{rem:calc1a} and Remark \ref{rem:calc2a}. In terms of the new coordinates, the corresponding (conserved) energy integrals \eqref{eq:LtLenergy} are
\begin{equation}
\label{eq:energies}
E(x,\dot{x})=\frac{1}{2} \sum_{k=1}^n \dot{x}_k^2+U(x),
\quad\quad
E(x,\dot{x})=\frac{1}{2} \sum_{k=1}^n \lambda_k \dot{x}_k^2+\widetilde U(x)\,.
\end{equation}
\end{remark}
\begin{proposition}
In terms of spectral coordinates, the Euler--Lagrange equations \eqref{eq:xEL} coincide if
\begin{equation}
\label{eq:xsameEL}
\partial_{x_i} \widetilde U
=
\lambda_i\, \partial_{x_i} U,
\qquad i=1,\dots,n.
\end{equation}
\end{proposition}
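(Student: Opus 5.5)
Two routes are available. The direct one works with the spectral form of the equations in Proposition~\ref{eq:xEL}; the other transports condition \eqref{eq:sameEL} into spectral coordinates. I will do the first and use the second as a consistency check.

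\emph{Direct route.} By Proposition~\ref{eq:xEL}, the two systems in spectral coordinates are
\begin{equation*}
\ddot x_k=-\partial_{x_k}U
\qquad\text{and}\qquad
\lambda_k\ddot x_k=-\partial_{x_k}\widetilde U .
\end{equation*}
Here $\lambda_k$ is the diagonal entry of $D$ in position $k$, which is nonzero because $A$ is invertible. Dividing by $\lambda_k$, the second system reads $\ddot x_k=-\lambda_k^{-1}\partial_{x_k}\widetilde U$. Two second-order systems in normal form coincide exactly when their right-hand sides agree as functions of $x$. This gives $\partial_{x_k}U=\lambda_k^{-1}\partial_{x_k}\widetilde U$, which is \eqref{eq:xsameEL}. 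Read in the reverse direction, the same step shows the condition is also necessary.

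\emph{Consistency check.} I will verify that \eqref{eq:xsameEL} is the same as \eqref{eq:sameEL}, i.e.\ that $\nabla_q\widetilde U=A\nabla_q U$ is equivalent to $\nabla_x\widetilde U=D\nabla_x U$.

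\begin{enumerate}
\item By Remark~\ref{rem:calc2a}, $\partial_{q_i}=\sum_j Q_{ij}\partial_{x_j}$, so $\nabla_q=Q\nabla_x$.
\item Substituting into \eqref{eq:sameEL} gives $Q\nabla_x\widetilde U=AQ\nabla_xU$.
\item Multiplying on the left by $Q^T$ and using $Q^TQ=\mathbb I_n$ together with $Q^TAQ=D$ from \eqref{eq:diag} gives $\nabla_x\widetilde U=D\nabla_xU$.
\item Each step is reversible, since $Q$ is invertible.
\end{enumerate}

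Written componentwise, the last equation is \eqref{eq:xsameEL}.

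The only delicate point is bookkeeping. The $\lambda_k$ must be indexed per coordinate, with repetitions according to multiplicity as in \eqref{eq:diag}, and not by the distinct eigenvalues. The invertibility of $A$ is needed so that dividing by $\lambda_k$ is legitimate. With those two points handled, nothing substantive remains.
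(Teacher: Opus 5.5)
Your proposal is correct. Its second half, the ``consistency check'', is exactly the paper's proof. The paper multiplies \eqref{eq:sameEL} by $Q^T$, converts $\partial_q$ into $\partial_x$ via Remark~\ref{rem:calc2a}, and uses $Q^TQ=\mathbb I_n$ and $Q^TAQ=D$, all in index notation; your vector form $\nabla_q=Q\nabla_x$ is cleaner.

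Your primary route compares the normal forms of the two spectral systems of Proposition~\ref{eq:xEL}. The paper uses this argument only as the check in Remark~\ref{rem:sameEoM}, which it later invokes for $(2)\Rightarrow(1)$ in Proposition~\ref{theo:eq_statmnt}.

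The two routes trade off as follows.
\begin{itemize}
\item Your direct route stays in spectral coordinates and gives both directions at once. It also makes the use of $\lambda_k\neq0$ (invertibility of $A$) explicit; the paper's remark divides by $\lambda_k$ silently.
\item The paper's route is a pure change of variables in the condition. It therefore inherits its dynamical meaning from the proposition establishing \eqref{eq:sameEL}, where invertibility of $A$ is used tacitly to cancel $A$. In exchange, it shows that \eqref{eq:xsameEL} is literally \eqref{eq:sameEL} written in an eigenbasis of $A$, and hence equivalent to the commutation condition of Proposition~\ref{eq:commHess}.
\end{itemize}

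As written, the paper's computation runs from \eqref{eq:sameEL} to \eqref{eq:xsameEL}, i.e.\ in the necessity direction, while the statement asserts sufficiency. Your explicit remark that every step is reversible, since $Q$ is invertible, is what makes the stated direction follow.

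A minor point about your direct route: the ``exactly when'' already yields both directions. The direction the statement needs is sufficiency, and it is trivial, since identical right-hand sides give identical systems. Necessity is the half that uses the fact that every point $x$ lies on some solution.
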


\begin{proof}
Multiplying both sides of \eqref{eq:sameEL} by $Q^T$ yields
\begin{equation}
\textstyle
\sum_{k}Q^T_{ki}\partial_{q_i} \widetilde U = \sum_{k} \sum_{j}Q^T_{ki}A_{ij} \, \partial_{q_j} U=\sum_{k} \sum_{j}Q^T_{ki}A_{ij} \, \partial_{q_j} U.
\end{equation}
Using Remark~\ref{rem:calc2a} to express derivatives with respect to $q$ in terms
of derivatives with respect to $x$, we obtain
\begin{equation}
\textstyle
\sum_q
\Big(\sum_i Q^T_{ki} Q_{iq}\Big)\partial_{x_q} \widetilde U
=
\sum_q
\Big(\sum_i \sum_j Q^T_{ki} A_{ij} Q_{jq}\Big)\partial_{x_q} U.
\end{equation}
Since
$\sum_i Q^T_{ki} Q_{iq}=\delta_{kq}$ and
$\sum_i \sum_j Q^T_{ki} A_{ij} Q_{jq}=\lambda_q\,\delta_{kq}$,
it follows that
\begin{equation}
\textstyle
\sum_q \delta_{kq}\,\partial_{x_q} \widetilde U
=
\sum_q \lambda_q\,\delta_{kq}\,\partial_{x_q} U.
\end{equation}
Removing the Kronecker delta yields \eqref{eq:xsameEL}.
\end{proof}

\begin{remark}
\label{rem:sameEoM}
As a simple check, one can verify directly that imposing
\eqref{eq:xsameEL} into \eqref{eq:xEoM} makes the two systems coincide. Indeed,
$
\textstyle
0
=
\lambda_k \ddot x_k + \partial_{x_k} \widetilde U
=
\lambda_k \ddot x_k + \lambda_k \partial_{x_k} U
=
\lambda_k \big(\ddot x_k + \partial_{x_k} U\big),
$
and therefore
$\ddot x_k + \partial_{x_k} U = 0$, for all $k=1,\dots,n$.
\end{remark}

\begin{remark}[Gauge transformation]
Given a Lagrangian $\mathcal L(q,\dot q)$ on $\mathbb R^n$ and a smooth scalar function $G(q)$, it is well known that
one can construct infinitely many Lagrangians $
\mathcal L'(q,\dot q)
=
\mathcal L(q,\dot q) + \nabla G(q)\cdot \dot q
$
generating the same
Euler--Lagrange equations of  $\mathcal L$.
In this case, $\mathcal L$ and
$\mathcal L'$ are said to be related by a \emph{gauge transformation}.
Our two quadratic Lagrangians \eqref{eq:LtL} are not related by a gauge
transformation, since their difference is quadratic in the velocities rather
than linear as $\nabla G(q)\cdot \dot q$. Therefore, their equivalence  stated in Proposition~\ref{theo:eq_statmnt} cannot be attributed to gauge
freedom.
\end{remark}

The above construction allows us to recast the problem of separability in terms of spectral coordinates, which will be crucial in what follows.


\section{Separability of equivalent quadratic Lagrangian systems}
\label{sec:sep}

In this section we investigate the separability properties of the quadratic
Lagrangian systems introduced above. 
Rather than assuming separability of the
potential \emph{a priori}, our approach is based on compatibility of Euler--Lagrange equations.
Indeed, the equivalence condition \eqref{eq:commHess} constrains the potential in a way that is
most naturally expressed in spectral coordinates. 

We begin by fixing a spectral setting adapted to the symmetric matrix $A$,
which will be used throughout the section to describe both the coordinates
and the structure of the potentials.
\begin{definition}[Spectral framework]
\label{def:spectral}
Let $A \in \mathrm{Sym}(n,\mathbb{R})$. We adopt the following spectral framework:
\begin{enumerate}
\item Let $Q \in O(n)$ diagonalize $A$, that is
$
Q^T A Q = \mathrm{diag}(\lambda_1 I_{m_1}, \dots, \lambda_r I_{m_r}),
$
where $\lambda_1,\dots,\lambda_r$ are the distinct eigenvalues of $A$ with
multiplicities $m_1,\dots,m_r$.
\item Decompose $Q = [Q_{(1)}, \dots, Q_{(r)}]$, where the columns of
$Q_{(k)} \in \mathbb{R}^{n \times m_k}$ form an orthonormal basis of the eigenspace
$\mathcal{E}_{\lambda_k}$.
\item Define the spectral coordinates $x := Q^T q \in \mathbb{R}^n$, with
$
x_{(k)} := Q_{(k)}^T q \in \mathbb{R}^{m_k},
$
so that $x = (x_{(1)}, \dots, x_{(r)})^T$.
\end{enumerate}
\end{definition}
Let us restate, in spectral coordinates, the equivalence condition previously
derived between the two quadratic Lagrangian systems.
\begin{proposition}[Equivalence conditions]
\label{theo:eq_statmnt}
Assume the spectral framework of Definition~\ref{def:spectral}. The following
statements are equivalent:
\begin{enumerate}
\item
The Lagrangians
$
\mathcal{L}(q,\dot q)=\tfrac12|\dot q|^2-U(q)
$
and
$
\widetilde{\mathcal{L}}(q,\dot q)=\tfrac12 \dot q^T A \dot q-\widetilde U(q)
$
generate the same Euler--Lagrange equations.
\item
The potentials satisfy
$
\partial_{x_i}\widetilde U=\lambda_i\,\partial_{x_i}U$, with $i=1,\dots,n.
$
\end{enumerate}
\end{proposition}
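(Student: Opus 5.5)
The plan is to chain together the earlier results, going through the Cartesian condition \eqref{eq:sameEL} as an intermediate step. Throughout, the eigenvalues are indexed per coordinate: $\lambda_i$ denotes the diagonal entry of $Q^TAQ$ in position $i$, so that the block structure of Definition~\ref{def:spectral} is simply a regrouping of equal eigenvalues.

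\emph{First step: (1) is equivalent to \eqref{eq:sameEL}.} This is the first proposition of Section~\ref{sec:nnn}. I will recall the key point, namely that $A$ must be invertible. Suppose the two systems in \eqref{eq:LtLequations} coincide, meaning they have the same solution set. For each $q_0$, take the solution of $\ddot q=-\nabla U$ with initial data $(q_0,\dot q_0)$. Substituting it into the second system and evaluating at $t=0$ gives $-A\nabla U(q_0)+\nabla\widetilde U(q_0)=0$. Since $q_0$ is arbitrary, this yields \eqref{eq:sameEL}. Conversely, \eqref{eq:sameEL} gives $A(\ddot q+\nabla U)=\ddot q\cdot A-\dots$, more precisely $A\ddot q+\nabla\widetilde U=A(\ddot q+\nabla U)$, and invertibility of $A$ then shows that the two systems are equivalent.

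\emph{Second step: \eqref{eq:sameEL} is equivalent to (2).} This is a linear change of variables. By Remark~\ref{rem:calc2a}, $\nabla_q=Q\nabla_x$ as operators on functions composed with $q=Qx$. Hence \eqref{eq:sameEL}, written as $\nabla_q\widetilde U=A\nabla_q U$, becomes $Q\nabla_x\widetilde U=AQ\nabla_xU$. Multiplying on the left by $Q^T$ and using $Q^TQ=\mathbb I_n$ and $Q^TAQ=D$ gives $\nabla_x\widetilde U=D\nabla_xU$, which in components is exactly (2). Every step is reversible because $Q$ is orthogonal, so multiplying by $Q$ recovers \eqref{eq:sameEL} from (2). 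As a consistency check, this is the content of the proposition giving \eqref{eq:xsameEL}, now read in both directions, together with Remark~\ref{rem:sameEoM}.

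\emph{Expected obstacle.} The main delicate point is making precise what "generate the same Euler--Lagrange equations" means, so that the implication (1)$\Rightarrow$(2) is rigorous. I will interpret it as equality of the solution sets of the two second-order systems. Under that reading, the pointwise identity follows from existence of solutions through every initial datum, which holds by Peano's theorem since $\nabla U$ is continuous. Invertibility of $A$ is needed only for the converse direction. No potential-existence issue of the kind handled in Proposition~\ref{eq:commHess} arises here, because $\widetilde U$ is given in both statements.
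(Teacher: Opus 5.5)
Your proposal is correct and follows essentially the paper's route: (1)$\Rightarrow$(2) goes through \eqref{eq:sameEL} and the orthogonal change of variables $x=Q^Tq$, exactly as in the derivation of \eqref{eq:xsameEL}. The differences are cosmetic --- for (2)$\Rightarrow$(1) the paper substitutes directly into the spectral equations and divides by $\lambda_k$ (Remark~\ref{rem:sameEoM}, implicitly using $\lambda_k\neq0$), whereas you pull back to Cartesian form and invoke invertibility of $A$; and your solution-set reading of ``same Euler--Lagrange equations'', with a solution through arbitrary initial data, makes rigorous the step the paper only asserts.
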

\begin{proof}
The implication $(1)\Rightarrow(2)$ follows from
\eqref{eq:xsameEL} in Section~\ref{sec:nnn}.
The converse $(2)\Rightarrow(1)$ follows from Remark~\ref{rem:sameEoM}.
\end{proof}
We now show that the equivalence condition has strong structural consequences
on the form of the potentials, forcing a block-wise separation dictated by
the spectral decomposition of $A$.
\begin{theorem}[Weak spectral separation and first integrals]
\label{th:weak}
Under the assumptions of Proposition~\ref{theo:eq_statmnt}, the potentials split
along the spectral decomposition of $A$ as
\begin{equation}
\label{eq:weak-potentials}
U=\sum_{k=1}^r H_k[x_{(k)}],
\qquad
\widetilde U=\sum_{k=1}^r \lambda_k H_k[x_{(k)}]+C,
\end{equation}
for suitable smooth functions $H_k:\mathbb{R}^{m_k}\to\mathbb{R}$. Consequently,
the Euler--Lagrange equations decouple into $r$ independent blocks,
\begin{equation}
\label{eq:weak-eom}
\ddot x_{(k)}+\nabla H_k[x_{(k)}]=0,
\qquad k=1,\dots,r.
\end{equation}
Moreover, the corresponding block energy sums
\begin{equation}
\label{eq:weak-energies}
E=\sum_{k=1}^r\!\Big(\tfrac12|\dot x_{(k)}|^2+H_k[x_{(k)}]\Big),
\quad
\widetilde E=\sum_{k=1}^r \lambda_k\!\Big(\tfrac12|\dot x_{(k)}|^2+H_k[x_{(k)}]\Big)+C,
\end{equation}
are conserved along solutions.
\end{theorem}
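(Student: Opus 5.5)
The plan is to work entirely in spectral coordinates. I start from statement (2) of Proposition~\ref{theo:eq_statmnt}, namely $\partial_{x_i}\widetilde U=\lambda_i\,\partial_{x_i}U$, where $\lambda_i$ denotes the eigenvalue attached to the $i$-th spectral coordinate. I will extract from it the vanishing of all cross-block second derivatives of $U$. Since $q\mapsto x=Q^Tq$ is linear and invertible, $U$ and $\widetilde U$ remain $C^2$ in $x$, so Schwarz's theorem applies.

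\textbf{Step 1 (cross-block Hessian vanishes).} Take an index $i$ belonging to block $k$ and an index $j$ belonging to block $l$, with $k\neq l$. Differentiating the relation for $i$ with respect to $x_j$ gives $\partial^2_{x_jx_i}\widetilde U=\lambda_k\,\partial^2_{x_jx_i}U$. Differentiating the relation for $j$ with respect to $x_i$ gives $\partial^2_{x_ix_j}\widetilde U=\lambda_l\,\partial^2_{x_ix_j}U$. Equality of mixed partials then yields
\begin{equation*}
(\lambda_k-\lambda_l)\,\partial^2_{x_ix_j}U=0 .
\end{equation*}
Because $\lambda_k\neq\lambda_l$, we get $\partial^2_{x_ix_j}U\equiv 0$ whenever $x_i$ and $x_j$ lie in different blocks. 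Equivalently, $\nabla_{x_{(k)}}U$ depends only on $x_{(k)}$.

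\textbf{Step 2 (additive splitting).} This is the only step that needs some care, since we have only $C^2$ regularity and must avoid assuming analyticity. I will use a telescoping decomposition along the blocks:
\begin{equation*}
U(x)-U(0)=\sum_{k=1}^r\Big[U(x_{(1)},\dots,x_{(k)},0,\dots,0)-U(x_{(1)},\dots,x_{(k-1)},0,\dots,0)\Big].
\end{equation*}
Each bracket can be written as $\int_0^1 \nabla_{x_{(k)}}U(\dots,s\,x_{(k)},0,\dots)\cdot x_{(k)}\,ds$. By Step 1 the integrand depends only on $s\,x_{(k)}$, so I may replace the other blocks by $0$. The bracket then equals $U(0,\dots,x_{(k)},\dots,0)-U(0)$. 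Setting $H_k[x_{(k)}]:=U(0,\dots,x_{(k)},\dots,0)$, with the constant $U(0)$ absorbed into one of the $H_k$, gives $U=\sum_k H_k[x_{(k)}]$, where each $H_k$ is $C^2$. The domain is all of $\mathbb{R}^n$, which is convex, so all the segments used above lie in the domain.

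\textbf{Step 3 (the tilde potential).} Consider $\widetilde U-\sum_k\lambda_k H_k$. By the equivalence relation and Step 2, its $x_i$-derivative equals $\lambda_k\partial_{x_i}U-\lambda_k\partial_{x_i}H_k=0$ for $i$ in block $k$. Its gradient therefore vanishes on the connected set $\mathbb{R}^n$, so it equals a constant $C$. This gives \eqref{eq:weak-potentials}.

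\textbf{Step 4 (equations of motion and energies).} Substituting the splitting of $U$ into the first equation of \eqref{eq:xEoM} gives $\ddot x_{(k)}+\nabla H_k[x_{(k)}]=0$, since $\partial_{x_i}U=\partial_{x_i}H_k$ for $i$ in block $k$; this is \eqref{eq:weak-eom}. The second equation of \eqref{eq:xEoM} reduces to the same system by Remark~\ref{rem:sameEoM}. Finally, inserting \eqref{eq:weak-potentials} into \eqref{eq:energies}, and using $\lambda_i=\lambda_k$ on block $k$, gives \eqref{eq:weak-energies}. Their conservation follows from the earlier proposition on energies. It can also be checked directly: each block energy $\tfrac12|\dot x_{(k)}|^2+H_k$ is separately conserved by \eqref{eq:weak-eom}, so every combination of them is conserved as well.
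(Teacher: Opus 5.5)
Your proof is correct and follows the paper's own route: the $(\lambda_k-\lambda_l)\,\partial^2_{x_ix_j}U=0$ argument from equality of mixed partials, then an additive block splitting of $U$, then identification of $\widetilde U$, then substitution into \eqref{eq:xEoM} and \eqref{eq:energies}. Where the paper only asserts the integration step and writes $\widetilde H_k=\lambda_k H_k$ without tracking the additive constant, you make it explicit through the telescoping integral and the zero-gradient argument that produces $C$. One small bookkeeping slip: with $H_k[x_{(k)}]:=U(0,\dots,x_{(k)},\dots,0)$ the leftover constant is $-(r-1)U(0)$, not $U(0)$, and it is just as harmless to absorb into one of the $H_k$.
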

\begin{proof}
$[$Step $1]$
Let $x_\alpha \in (\lambda_a, \mathcal{E}_{\lambda_a})$ and $x_\beta \in (\lambda_b, \mathcal{E}_{\lambda_b})$ be spectral coordinates associated with two distinct block eigenspaces ($\lambda_a \neq \lambda_b$). Using assumption (2) of Proposition \ref{theo:eq_statmnt}, differentiate $\partial_{x_\alpha} \widetilde U = \lambda_a \partial_{x_\alpha} U$ with respect to $x_\beta$ to obtain
$
\partial^2_{x_\beta x_\alpha} \widetilde U = \lambda_a \partial^2_{x_\beta x_\alpha} U.
$
Similarly, differentiating $\partial_{x_\beta} \widetilde U = \lambda_b \partial_{x_\beta} U$ with respect to $x_\alpha$ yields
$
\partial^2_{x_\alpha x_\beta} \widetilde U = \lambda_b \partial^2_{x_\alpha x_\beta} U.
$
Since $U, \widetilde U \in C^2(\mathbb{R}^n)$, mixed partial derivatives commute, hence
$
\partial^2_{x_\alpha x_\beta} \widetilde U = \partial^2_{x_\beta x_\alpha} \widetilde U.
$
Therefore,
$
\lambda_a \partial^2_{x_\alpha x_\beta} U = \lambda_b \partial^2_{x_\alpha x_\beta} U,
$
which implies
$
(\lambda_a - \lambda_b)\, \partial^2_{x_\alpha x_\beta} U = 0.
$
Since $\lambda_a \neq \lambda_b$, it follows that
$
\partial^2_{x_\alpha x_\beta} U = 0.
$
By the same argument, one also obtains $\partial^2_{x_\alpha x_\beta} \widetilde U = 0$. Thus, all mixed derivatives between coordinates belonging to distinct eigenspaces vanish.
$[$Step $2]$
Since $\partial^2_{x_\alpha x_\beta} U = 0$ and $\partial^2_{x_\alpha x_\beta} \widetilde U = 0$, integration of these conditions implies the existence of smooth functions
$
H_k : \mathbb{R}^{m_k} \to \mathbb{R}$ and 
$\widetilde H_k : \mathbb{R}^{m_k} \to \mathbb{R},
$
such that
$
U(x) = H_1[x_{(1)}] + \cdots + H_r[x_{(r)}]$ and 
$\widetilde U(x) = \widetilde H_1[x_{(1)}] + \cdots + \widetilde H_r[x_{(r)}].
$
Using assumption (2) of Proposition \ref{theo:eq_statmnt}, namely $\partial_{x_\alpha} \widetilde U = \lambda_a \partial_{x_\alpha} U$, we deduce that
$
\widetilde H_k[x_{(k)}] = \lambda_k\, H_k[x_{(k)}].
$
Substituting these expressions into Eqs.~\eqref{eq:xEoM} and \eqref{eq:energies}, the desired results follow.
\end{proof}
A particularly relevant situation occurs when all the eigenspaces of $A$
are one-dimensional, leading to a complete separation of the dynamics.
\begin{theorem}[Full spectral separation and first integrals]
\label{th:strong}
Assume that the equivalent conditions of Theorem~\ref{theo:eq_statmnt} hold and
that each eigenspace of $A$ is one-dimensional, i.e.\ $\dim \mathcal{E}_{\lambda_k}=m_k=1$
for all $k$. Then the potentials $U$ and $\widetilde U$ are fully separable and
can be written as
\begin{equation}
\label{eq:strong-potentials}
U = \sum_{k=1}^n f_k(x_k),
\qquad
\widetilde U = \sum_{k=1}^n \lambda_k f_k(x_k),
\end{equation}
for suitable smooth functions $f_k:\mathbb{R}\to\mathbb{R}$. Consequently, the
Euler--Lagrange equations decouple completely,
\begin{equation}
\label{eq:strong-eom}
\ddot x_k + f_k'(x_k) = 0,
\qquad k=1,\dots,n,
\end{equation}
and each coordinate admits an associated conserved energy. In particular, the
total conserved quantities take the form
\begin{equation}
\label{eq:strong-energies}
E = \sum_{k=1}^n\!\Big(\tfrac12 \dot x_k^2 + f_k(x_k)\Big),
\qquad
\widetilde E = \sum_{k=1}^n \lambda_k\Big(\tfrac12 \dot x_k^2 + f_k(x_k)\Big).
\end{equation}
\end{theorem}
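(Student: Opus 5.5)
The plan is to obtain Theorem~\ref{th:strong} as the special case $m_1=\dots=m_r=1$ of Theorem~\ref{th:weak}. When every eigenspace is one-dimensional we have $r=n$ and each block coordinate $x_{(k)}$ is a single scalar $x_k$. The first step is to invoke Theorem~\ref{th:weak} directly. It gives smooth functions $H_k:\mathbb{R}\to\mathbb{R}$ with $U=\sum_k H_k(x_k)$ and $\widetilde U=\sum_k\lambda_k H_k(x_k)+C$. I will set $f_k:=H_k$.

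The second step handles the additive constant $C$, which is absent from \eqref{eq:strong-potentials}. Since $\widetilde U$ enters the dynamics only through its gradient, $C$ may be absorbed. Concretely, at least one eigenvalue of the invertible matrix $A$ is nonzero, say $\lambda_1\neq0$. I will replace $f_1$ by $f_1+C/\lambda_1$ and simultaneously subtract $C/\lambda_1$ from $U$. Equivalently, I fix the free normalization of $\widetilde U$, which Proposition~\ref{theo:eq_statmnt}(2) only determines up to a constant. This step needs a little care. Shifting $U$ changes $E$ only by a constant, so conservation is unaffected. I will state explicitly that the identity holds up to this normalization.

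The third step gives the equations of motion. Since $\nabla H_k$ reduces to the derivative $f_k'$ in one dimension, \eqref{eq:weak-eom} becomes \eqref{eq:strong-eom}. Alternatively, one can insert \eqref{eq:strong-potentials} into the first equation of \eqref{eq:xEoM} and use $\partial_{x_k}\sum_j f_j(x_j)=f_k'(x_k)$. The totals \eqref{eq:strong-energies} are then \eqref{eq:weak-energies} with $|\dot x_{(k)}|^2=\dot x_k^2$.

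For the claim that each coordinate has its own conserved energy, I will differentiate $E_k:=\tfrac12\dot x_k^2+f_k(x_k)$ along solutions. This gives $\dot E_k=\dot x_k(\ddot x_k+f_k'(x_k))=0$ by \eqref{eq:strong-eom}.

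The only genuine obstacle is the constant $C$ in step two. Everything else is a direct specialization of Theorem~\ref{th:weak}.
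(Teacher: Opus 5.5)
Your proposal is correct and follows the same route as the paper, whose proof is exactly the specialization of Theorem~\ref{th:weak} to $m_k=1$, $r=n$. The paper silently drops the constant $C$, so your explicit check that each $E_k$ is conserved and your attention to $C$ are more careful than the original.

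Your handling of $C$ can be made exact rather than ``up to normalization.'' For $n\ge2$ the simple spectrum gives $\lambda_1\neq\lambda_2$, and choosing $f_1=H_1+C/(\lambda_1-\lambda_2)$, $f_2=H_2-C/(\lambda_1-\lambda_2)$, $f_k=H_k$ otherwise yields both identities in \eqref{eq:strong-potentials} without altering $U$ or $\widetilde U$. Only for $n=1$ is a renormalization unavoidable. Note also a sign slip: shifting $f_1$ by $+C/\lambda_1$ would require raising $U$ by $C/\lambda_1$, not lowering it.
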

\begin{proof}
If each spectral block has dimension $m_k=1$, then $x_{(k)}$ reduces to the single
coordinate $x_k$, and each function
$H_k:\mathbb{R}^{m_k}\to\mathbb{R}$ appearing in
\eqref{eq:weak-potentials} becomes a scalar function
$f_k:\mathbb{R}\to\mathbb{R}$. Accordingly, the block-decomposed expressions for
the potentials, the equations of motion, and the conserved quantities given in
\eqref{eq:weak-potentials}, \eqref{eq:weak-eom}, and \eqref{eq:weak-energies}
reduce exactly to
\eqref{eq:strong-potentials}, \eqref{eq:strong-eom}, and
\eqref{eq:strong-energies}, respectively.
\end{proof}
\begin{remark}
The condition $m_k=1$ for all $k$ is equivalent to the absence of nontrivial
orthogonal symmetries within the eigenspaces of $A$.
\end{remark}
\begin{remark}[Failure case]
When $A$ depends on $q$, full spectral separation generally breaks down, as
exemplified by Hamiltonian systems on Riemannian manifolds; this case will be
studied in Section~\ref{sec:HHhigher} through the $n$-dimensional H\'enon--Heiles
system.
\end{remark}


\subsection{Overview of the main result}

The geometric mechanism underlying our construction is schematically illustrated in Figure~\ref{fig:block-separation}. Requiring two quadratic Lagrangians to produce the same equations of motion imposes a spectral decoupling of the dynamics. More precisely, the system decomposes into independent blocks associated with the eigenspaces of the kinetic matrix. Complete separability occurs exactly when all eigenspaces are one-dimensional, yielding dynamically independent degrees of freedom, each endowed with its own conserved energy. Notably, no separability ansatz or Killing tensor structure is assumed \emph{a priori}.

\begin{figure}[h]
\centering
\begin{tikzpicture}[
    >=stealth,
    node distance=5mm and 9mm,
    every node/.style={align=center, font=\scriptsize}
]

\node (L)
{$
\begin{array}{c}
\mathcal{L} = \tfrac12 \dot q^2 - U(q) \\[1mm]
\widetilde{\mathcal{L}} = \tfrac12 \dot q^T A \dot q - \widetilde U(q) \\[1mm]
\mbox{\footnotesize Same Euler--Lagrange equations}
\end{array}
$};

\node (cond) [below=of L]
{$
\begin{array}{c}
\big[\partial^2_{qq}U,A\big]=0_{n\times n}\\[1mm]
\mbox{\footnotesize Compatibility condition}
\end{array}
$};

\node (diag) [below=of cond]
{
$
\begin{array}{c}
Q^T A Q =
\mathrm{diag}(\lambda_1 I_{m_1},\dots,\lambda_r I_{m_r}) \\[1mm]
\mbox{\footnotesize Spectral block-decomposition}
\end{array}
$
}
;

\node (E1) [below left=of diag]
{$x^{(1)}\in \mathcal{E}_{\lambda_1}$};

\node (Edots) [below=of diag]
{$\vdots$};

\node (Er) [below right=of diag]
{$x^{(r)}\in \mathcal{E}_{\lambda_r}$};

\node (H1) [below=of E1]
{$\begin{aligned}
U &= H_1[x^{(1)}] + \cdots \\
\widetilde U &= \lambda_1 \widetilde H_1[x^{(1)}] + \cdots
\end{aligned}$};

\node (Hr) [below=of Er]
{$\begin{aligned}
U &= \cdots + H_r[x^{(r)}] \\
\widetilde U &= \cdots + \lambda_r \widetilde H_r[x^{(r)}]
\end{aligned}$};

\node (EL) [below=8mm of Edots]
{
$
\begin{array}{c}
\mbox{\footnotesize Block-separated dynamics}\\[0.5mm]
(k=1,...,r)\\[1mm]
\ddot x_{(k)} + \partial H_k[x_{(k)}] = 0\\[1mm]
\lambda_k \ddot x_{(k)} + \partial \widetilde H_k[x_{(k)}] = 0
\end{array}
$

};

\draw[->] (L) -- (cond);
\draw[->] (cond) -- (diag);

\draw[->] (diag) -- (E1);
\draw[->] (diag) -- (Edots);
\draw[->] (diag) -- (Er);

\draw[->] (E1) -- (H1);
\draw[->] (Er) -- (Hr);

\draw[->] (Edots) -- (EL);

\end{tikzpicture}
\caption{\small
Geometric mechanism behind the theorems:
spectral decomposition of the constant symmetric matrix $A$
induces an orthogonal splitting
$\mathbb{R}^n=\bigoplus_k \mathcal{E}_{\lambda_k}$,
which forces block separation of the potentials and hence of the Euler--Lagrange equations.
}
\label{fig:block-separation}
\end{figure}

Our construction is closely related to the classical theory of Hamilton--Jacobi separability for natural Hamiltonian systems, particularly to St\"{a}ckel systems and the Benenti framework based on Killing tensors. In the classical St\"{a}ckel--Benenti setting, separability is formulated at the Hamiltonian level and relies on specific geometric assumptions: the kinetic energy determines a Riemannian metric, while the potential must satisfy compatibility conditions induced by an underlying St\"{a}ckel structure. The separation coordinates then arise from the joint spectral decomposition of the metric and the associated Killing tensors.

The present approach should therefore be viewed as complementary to the classical St\"{a}ckel--Benenti theory rather than as an alternative to it. Indeed, several systems that are separable in the classical sense also fit naturally within our framework, although the conceptual starting point is fundamentally different.


\section{Applications}
\label{sec:app}

We now describe how the spectral separation can be used in practice to analyze the separability and integrability properties of a given Lagrangian system. 

Consider a natural Lagrangian of the form
$
\mathcal L(q,\dot q)=\tfrac12\sum_{i=1}^n \dot q_i^2-U(q),
$
with a prescribed potential $U\in C^2(\mathbb R^n)$. The analysis proceeds as follows:
\begin{outline}[itemize]
\1 Compute the Hessian matrix $\partial^2_{qq}U$ of the potential. 
\1 Look for constant symmetric
matrices $A\in\mathrm{Sym}(n,\mathbb R)$ that commute with the Hessian.
In practice, writing $A=(a_{ij})$ with unknown coefficients and imposing the
commutation conditon \eqref{eq:commHess}  yields a linear system for the $a_{ij}$,
whose solutions determine the admissible expressions for $A$.
\1 Once the $A$ matrix is found, its spectral properties determine the structure
of the dynamics:
(i) If $A$ has distinct eigenvalues, the equations of motion split
into independent spectral blocks, leading to block separation (Theorem \ref{th:weak}). 
(ii)  If, moreover,
the spectrum of $A$ is simple, full separation of variables is achieved (Theorem \ref{th:strong}). 
\1 In both (i) and (ii) cases, the theory provides explicit conserved quantities, namely the
canonical energy $E$ and the additional integral $\widetilde E$ [Eqs. \eqref{eq:weak-energies} for (i) and Eqs. \eqref{eq:strong-energies} for (ii)].
\end{outline}
The following examples illustrate this procedure and show how classical
integrable systems fit naturally within the spectral separation framework.


\subsection{The $\mathbb{R}^2$ Sawada--Kotera case}

We consider the two--dimensional Sawada--Kotera system, arising as a
finite--dimensional reduction of the integrable Sawada--Kotera equation
introduced in soliton theory as a higher--order member of the
Korteweg--de Vries hierarchy~\cite{SawadaKotera1974}. In its infinite--dimensional
setting, the Sawada--Kotera equation admits a Lax representation, a
bi--Hamiltonian structure, and an infinite sequence of conserved quantities \cite{FuchssteinerOevel1982}.

A distinguished reduction to two degrees of freedom leads to a natural
mechanical system with a polynomial potential, endowed with additional first
integrals quadratic in the velocities and admitting separation of variables \cite{Fordy1983,CosgroveScoufis1993}.

Specifically, for $q=(q_1,q_2)\in\mathbb R^2$, the associated Lagrangian is
\begin{equation}
\mathcal{L}(q,\dot q)=\tfrac12 |\dot q|^2-U(q),
\qquad
U(q)=\tfrac12(q_1^2+q_2^2)+q_1^2 q_2+\tfrac13 q_2^3 .
\end{equation}
From the viewpoint of finite--dimensional dynamics, this system belongs to the
class of integrable cubic models related to the generalized
H\'enon--Heiles family, which plays a central role in the study of integrability
and the transition to chaotic behavior~\cite{HenonHeiles1964}.

We consider a generic (constant) symmetric matrix
\begin{equation}
A=
\begin{pmatrix}
a & b\\
b & d
\end{pmatrix}.
\end{equation}
The Hessian of the potential reads
\begin{equation}
\partial^2_{qq}U=
\begin{pmatrix}
1+2q_2 & 2q_1 \\
2q_1 & 1+2q_2
\end{pmatrix}.
\end{equation}
Imposing the commutation condition
$\big[\partial^2_{qq}U,A\big]=0_{2\times2}$ (cf.\ Proposition~\ref{eq:commHess})
yields the constraint $2(a-d)q_1=0$, which implies $a=d$.
Hence, all compatible matrices are of the form
$A=aI_2+bJ_2$, where $J_2$ has zero diagonal entries and unit off-diagonal
entries.

The eigenvalues of $A$ are $\lambda_1=a+b$ and $\lambda_2=a-b$, with corresponding
orthonormal eigenvectors
$v_1=\tfrac{1}{\sqrt2}(1,1)$ and $v_2=\tfrac{1}{\sqrt2}(1,-1)$.
Let
\begin{equation}
Q=\frac1{\sqrt2}
\begin{pmatrix}
1 & 1\\
1 & -1
\end{pmatrix},
\end{equation}
be the orthogonal matrix collecting these eigenvectors. Then
$Q^T A Q=\mathrm{diag}(a+b,a-b)$, and the associated spectral coordinates
(Definition~\ref{def:spectral}) are
$x=Q^T q \in \mathbb{R}^2$, namely
$x_1=\frac{q_1+q_2}{\sqrt2}$ and $x_2=\frac{q_1-q_2}{\sqrt2}$.

In these coordinates the potential becomes
\begin{equation}
U(x_1,x_2)=\tfrac12(x_1^2+x_2^2)+\tfrac16 x_1^3-\tfrac16 x_2^3
=f(x_1)+g(x_2),
\end{equation}
with $f(x)=\frac12 x^2+\frac16 x^3$ and
$g(x)=\frac12 x^2-\frac16 x^3$.
This realizes the full spectral separation predicted by
Theorem~\ref{th:strong}, since the eigenvalues of $A$ are simple for $b\neq0$.
Accordingly, the equations of motion decouple into two independent nonlinear
oscillators,
\begin{equation}
\ddot x_1+x_1+\tfrac12 x_1^2=0,
\qquad
\ddot x_2+x_2-\tfrac12 x_2^2=0,
\end{equation}
with first integrals
$I_1=\frac12\dot x_1^2+f(x_1)$ and
$I_2=\frac12\dot x_2^2+g(x_2)$.

The canonical energy of the original Lagrangian is
\begin{equation}
E=I_1+I_2
=\tfrac12(\dot x_1^2+\dot x_2^2)
+\tfrac12(x_1^2+x_2^2)
+\tfrac16(x_1^3-x_2^3),
\end{equation}
while the compatible quadratic Lagrangian yields the additional conserved
quantity
$\widetilde E=\lambda_1 I_1+\lambda_2 I_2
=aE+b(I_1-I_2)$, explicitly,
\begin{equation}
\widetilde E=
(a+b)\!\left(\tfrac12\dot x_1^2+\tfrac12 x_1^2+\tfrac16 x_1^3\right)
+(a-b)\!\left(\tfrac12\dot x_2^2+\tfrac12 x_2^2-\tfrac16 x_2^3\right).
\end{equation}

Thus, the Sawada--Kotera system is completely separable in the spectral
coordinates associated with $A$, and the integrals $E$ and $\widetilde E$
correspond to linear combinations of the separated energies. These results
recover exactly those obtained in~\cite{GorniScomparinZampieri2024}.


\subsection{The $n$-dimensional extension of H\'enon--Heiles case}
\label{sec:HHhigher}

We consider an $n$--dimensional generalization of the classical H\'enon--Heiles
model, obtained by replacing one oscillator direction with an isotropic
$(n-1)$--dimensional sector while preserving the cubic interaction structure of the potential \cite{HenonHeiles1964}.
The classical H\'enon--Heiles system, originally introduced as a model in
galactic dynamics to investigate the existence of additional integrals of
motion, plays a central role in the theory of nonlinear Hamiltonian systems \cite{BlaszakRauchWojciechowski1994}.

Let us write  the natural Lagrangian in $\mathbb{R}^n$
\begin{equation}
\mathcal{L}(q,\dot{q}) = \frac{1}{2} \sum_{i=1}^n \dot{q}_i^2 - U(q),
\end{equation}
with the generalized cubic potential
\begin{equation}
U(q) = \frac{\alpha}{2} \sum_{i=1}^{n-1} q_i^2 + \frac{1}{2} \beta q_n^2 + a\, q_n \sum_{i=1}^{n-1} q_i^2 + b\, q_n^3.
\end{equation}
where $\alpha,\beta,a,b\in \mathbb{R}$ are constant parameters. Here, the classical H\'enon--Heiles system is recovered as the special case $n=2$.

The Hessian of the potential $U$ in block-form can be written as:
\begin{equation}
\label{eq:HessHH}
\partial^2_{qq}U = 
\begin{pmatrix}
(\alpha + 2 a q_n) I_{n-1} & 2 a \, q_{1:n-1} \\
2 a \, q_{1:n-1}^T & \beta + 6 b q_n
\end{pmatrix},
\end{equation}
where we have defined $q_{1:n-1} = (q_1, \dots, q_{n-1})^T\in \mathbb{R}^{n-1}$. Hence, following Proposition \ref{eq:commHess}, we look for a symmetric matrix
\begin{equation}
\label{eq:A_HH}
A = \begin{pmatrix} B & v \\ v^T & d \end{pmatrix},
\end{equation}
with $B \in \mathrm{Sym}(n,\mathbb{R})$, $v \in \mathbb{R}^{n-1}$ and $d \in \mathbb{R}$,
such that the commutator $\big[\partial^2_{qq}U,A\big]=0$. 
Writing down the commutator in blocks, we get the block-equation
\begin{equation}
\label{eq:commHH}
\begin{pmatrix}
\big[B, (\alpha + 2 a q_n) I_{n-1}\big] & B(2 a q_{1:n-1}) - v (\beta + 6 b q_n) \\
2 a q_{1:n-1}^T B - 2 a d q_{1:n-1}^T & v^T 2 a q_{1:n-1} - 2 a q^T_{1:n-1} v
\end{pmatrix} = 0_{n\times n}.
\end{equation}
We analyze each block of Eq.~\eqref{eq:commHH} separately, which must vanish identically. Our goal is to determine the matrices $B$ and $v$, and the scalar $d$, thus providing a possible parametrization of the matrix $A$ defined in Eq.~\eqref{eq:A_HH}. In particular,

\begin{itemize}
\item \textit{Upper-left block}.  
Let us compute the commutator
$[B,(\alpha+2a q_n)I_{n-1}] = B(\alpha+2a q_n)I_{n-1} - (\alpha+2a q_n)I_{n-1}B$.
Since the identity matrix $I_{n-1}$ commutes with every matrix, we have
$I_{n-1}B=BI_{n-1}$.
Hence, this block vanishes identically and does not impose any constraint on $B$.
\item \textit{Lower-left block}.  
Consider the condition $\mathbf{0}_{1\times(n-1)} = q_{1:n-1}^T B - d\,q_{1:n-1}^T$.
This can be rewritten as
$q_{1:n-1}^T(B-dI_{n-1})=\mathbf{0}_{1\times(n-1)}$.
Therefore, we conclude that $B=dI_{n-1}$.
\item \textit{Upper-right block}.  
Consider $B(2a q_{1:n-1}) - v(\beta + 6b q_n) = \mathbf{0}_{(n-1)\times1}$.
Substituting the expression for $B$ obtained in the previous block, we obtain
$2ad\,q_{1:n-1} - v(\beta + 6b q_n) = \mathbf{0}_{(n-1)\times1}$,
from which it follows that $v=\mu(q_n)\,q_{1:n-1}$, with
\begin{equation}
\label{eq:mu}
\mu(q_n)\equiv \frac{2ad}{\beta+6bq_n}\,.
\end{equation}
\item \textit{Lower-right block}.  
This block vanishes identically, since
$2a(v^T q_{1:n-1} - q_{1:n-1}^T v)
= 2a(v\cdot q_{1:n-1} - q_{1:n-1}\cdot v)=0$,
as the scalar product is symmetric.
\end{itemize}
In conclusion, the $A$ matrix can be rewritten as
\begin{equation}
A = 
\begin{pmatrix}
dI_{n-1} & \mu(q_n) q_{1:n-1} \\
\mu(q_n) q_{1:n-1}^T & d
\end{pmatrix}.
\end{equation}
\begin{proposition}
\label{prop:avlavtHH}
The spectrum of $A$ is given by
$\lambda_1=\dots=\lambda_{n-2}=d$,
and
$\lambda_\pm=d\pm\mu\,\|q_{1:n-1}\|$.
The eigenspace associated with the eigenvalue $d$ is
$\mathcal{E}_d=\{(y_{1:n-1},0)^T\in\mathbb{R}^n : y_{1:n-1}\cdot q_{1:n-1}=0\}$,
which has dimension $n-2$.
The eigenvectors corresponding to $\lambda_\pm$ are
$h_\pm=(q_{1:n-1},\pm\|q_{1:n-1}\|)^T$.
\end{proposition}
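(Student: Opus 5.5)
The plan is to exploit the fact that $A$ is a rank-two perturbation of a multiple of the identity. Set $w:=q_{1:n-1}\in\mathbb{R}^{n-1}$ and let $e_n$ denote the last standard basis vector of $\mathbb{R}^n$. Then
\[
A=d\,\mathbb I_n+\mu\,N,\qquad N:=\begin{pmatrix}0 & w\\ w^T & 0\end{pmatrix}=\hat w\,e_n^T+e_n\,\hat w^T,\qquad \hat w:=(w,0)^T .
\]
Hence the eigenvectors of $A$ are exactly those of $N$, and the eigenvalues are shifted: $\lambda=d+\mu\nu$, where $\nu$ is an eigenvalue of $N$. I will work at a fixed configuration $q$ with $w\neq0$, and treat $\mu=\mu(q_n)$ as a fixed scalar. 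If $w=0$, or if $\mu=0$ (that is, $ad=0$), then $A=d\,\mathbb I_n$ and the statement degenerates trivially. I would record this degenerate case as a remark.

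The key steps are direct verifications.
\begin{itemize}
\item For $y\in\mathbb{R}^{n-1}$ with $y\cdot w=0$, block multiplication gives
\[
A(y,0)^T=(dy,\ \mu\, w\cdot y)^T=d\,(y,0)^T .
\]
So the subspace $\{(y,0)^T: y\perp w\}$ lies in $\mathcal E_d$. It has dimension $(n-1)-1=n-2$, since $w\neq0$.
\item For $h_\pm=(w,\pm\|w\|)^T$ one computes
\[
Ah_\pm=\big(dw\pm\mu\|w\|\,w,\ \mu\|w\|^2\pm d\|w\|\big)^T=(d\pm\mu\|w\|)\,h_\pm .
\]
\end{itemize}

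Next comes completeness. The $n-2$ vectors spanning the first subspace, together with $h_+$ and $h_-$, are mutually orthogonal: the first family is orthogonal to $h_\pm$ because $y\cdot w=0$, and $h_+\cdot h_-=\|w\|^2-\|w\|^2=0$. All of these vectors are nonzero, so they form a basis of $\mathbb{R}^n$. The spectrum is therefore exactly $\{d\ (\text{multiplicity } n-2),\ d+\mu\|w\|,\ d-\mu\|w\|\}$. As a sanity check, this agrees with $\operatorname{tr}A=nd$ and with $\operatorname{rank}N=2$.

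The one point requiring care, and the step I expect to be the main obstacle, is the claim that $\mathcal E_d$ equals the described subspace and is not larger. This holds precisely when $\lambda_\pm\neq d$, i.e.\ when $\mu\|w\|\neq0$. Under that condition, any $d$-eigenvector is orthogonal to $h_\pm$ by the spectral theorem, so it lies in the span of the first family. I will state this nondegeneracy hypothesis explicitly. Finally, I will note that $\mu$ is only defined where $\beta+6bq_n\neq0$, so all statements are understood pointwise on that set.
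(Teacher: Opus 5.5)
Your proof is correct, but it takes a different route from the paper's.

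\textbf{The two routes.} The paper computes $\det(A-\lambda I_n)$ with the Schur-complement formula, taking $P=(d-\lambda)I_{n-1}$. This gives $(d-\lambda)^{n-2}\bigl[(d-\lambda)^2-\mu^2\|q_{1:n-1}\|^2\bigr]$, and the paper then solves $(A-\lambda I_n)h=0$ separately for $\lambda=d$ and $\lambda=\lambda_\pm$. You instead write $A=dI_n+\mu N$ with $N$ of rank two and verify the candidate eigenpairs by block multiplication. Completeness then follows because a basis of $\{(y,0)^T : y\perp q_{1:n-1}\}$ together with $h_\pm$ gives $n$ mutually orthogonal nonzero eigenvectors, and the spectral theorem identifies $\mathcal{E}_d$.

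\textbf{What each approach buys.} The paper's method is derivational: it produces the eigenvalues and their algebraic multiplicities without guessing eigenvectors. Yours needs the candidates in advance, but it has three advantages.
\begin{itemize}
\item It avoids determinants entirely. The Schur formula needs $P$ invertible, i.e.\ $\lambda\neq d$, so the paper's factorization strictly holds only off $\lambda=d$ and must be extended there by polynomial identity; your argument never meets this issue.
\item Your orthogonality check is exactly what justifies the paper's subsequent claim that $Q=[e_{(1)},\dots,e_{(n-2)},u_{(+)},u_{(-)}]$ lies in $O(n)$.
\item You make explicit the hypothesis $\mu\|q_{1:n-1}\|\neq0$ (on the set where $\beta+6bq_n\neq0$). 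The paper uses it only tacitly: it deduces $\eta=0$ and $q_{1:n-1}\cdot y_{1:n-1}=0$ from $\mu\,q_{1:n-1}\eta=0$ and $\mu\,q_{1:n-1}^T y_{1:n-1}=0$, and it divides by $\mu$ in the $\lambda_\pm$ case.
\end{itemize}

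\textbf{One correction to your framing.} When $\mu\|q_{1:n-1}\|=0$, we have $A=dI_n$ and $\mathcal{E}_d=\mathbb{R}^n$, and $h_\pm$ even vanish if $q_{1:n-1}=0$. The statement is therefore not merely ``trivially degenerate'' there, as you put it, but false as written. The nondegeneracy hypothesis belongs in the statement itself, not in a side remark.
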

\begin{proof}
We recall that the determinant formula for block matrices \cite{Zhang2005} gives
\begin{equation}
\det
\begin{pmatrix}
P & R \\
S & T
\end{pmatrix}
=
\det\left(P\right)\det\left(T-SP^{-1}R\right),
\end{equation}
which is valid when $P$ is invertible. We take
$P=(d-\lambda)I_{n-1}$,
$R=\mu(q_n) q_{1:n-1}$,
$S=\mu(q_n) q_{1:n-1}^T$, and $T=d-\lambda$, where $\mu(q_n)$ is defined in Eq.~\eqref{eq:mu}.
Hence, since $P^{-1}=(d-\lambda)^{-1}I_{n-1}$, we obtain
\begin{equation}
\textstyle
\begin{aligned}
SP^{-1}R
&=\mu(q_n) q_{1:n-1}^T(d-\lambda)^{-1}I_{n-1}\mu(q_n)q_{1:n-1},\\
&=\mu^2(q_n) q_{1:n-1}^T q_{1:n-1}(d-\lambda)^{-1},\\
&=\mu^2(q_n)\|q_{1:n-1}\|^2(d-\lambda)^{-1}.
\end{aligned}
\end{equation}
The eigenvalues of $A$ are obtained from the vanishing of the characteristic polynomial. Using the above results, we get
\begin{equation}
\textstyle
\begin{aligned}
0
&\equiv\det(A-\lambda I_n),\\
&=\det \begin{pmatrix}
(d-\lambda)I_{n-1} & \mu(q_n) q_{1:n-1}\\
\mu(q_n) q_{1:n-1}^T & d-\lambda
\end{pmatrix},\\
&=\det\Big[(d-\lambda)I_{n-1}\Big]
\det\Big[(d-\lambda)-\mu^2(q_n)\|q_{1:n-1}\|^2(d-\lambda)^{-1}\Big],\\
&=(d-\lambda)^{n-1}
\Big[(d-\lambda)-\mu^2(q_n)\|q_{1:n-1}\|^2(d-\lambda)^{-1}\Big],\\
&=(d-\lambda)^{n-2}
\Big[(d-\lambda)^2-\mu^2(q_n)\|q_{1:n-1}\|^2\Big].
\end{aligned}
\end{equation}
We thus obtain the eigenvalues $\lambda_1=\dots=\lambda_{n-2}=d$
and $(d-\lambda_\pm)^2=\mu^2(q_n)\|q_{1:n-1}\|^2$.
Hence
$d-\lambda_\mp=\pm\mu(q_n)\|q_{1:n-1}\|$
and therefore
$\lambda_\pm=d\pm\mu(q_n)\|q_{1:n-1}\|$.

Let us now turn to the eigenvectors.
Let $h\equiv(y_{1:n-1},\eta)^T$ with $y_{1:n-1}\in\mathbb{R}^{n-1}$ and $\eta\in\mathbb{R}$.
The eigenvalue equation $(A-\lambda I_n)h=0_n$ yields
\begin{equation}
\begin{pmatrix}
(d-\lambda)I_{n-1} & \mu(q_n) q_{1:n-1} \\
\mu(q_n) q_{1:n-1}^T & d-\lambda
\end{pmatrix}
\begin{pmatrix}
y_{1:n-1}\\
\eta
\end{pmatrix}
=0,
\end{equation}
which gives the system
\begin{equation}
\begin{cases}
(d-\lambda)y_{1:n-1}+\mu(q_n) q_{1:n-1}\,\eta=0_{n-1}, \\
\mu(q_n) q_{1:n-1}^T y_{1:n-1}+(d-\lambda)\eta=0.
\end{cases}
\end{equation}
In particular,
\begin{itemize}
\item \textit{Case $\lambda_1=\dots=\lambda_{n-2}=d$.}
Setting $\lambda=d$ in the first equation yields
$\mu(q_n) q_{1:n-1}\,\eta=0$.
For $q_{1:n-1}\neq0_{n-1}$ this implies $\eta=0$.
The second equation then gives
$q_{1:n-1}^T y_{1:n-1}=0$.
Hence, the eigenvectors are
$h=(y_{1:n-1},0)$ such that $y_{1:n-1}\cdot q_{1:n-1}=0$,
which form an $(n-2)$-dimensional eigenspace.

\item \textit{Case $\lambda_\pm$.}
For $\lambda_\pm$ we have $d-\lambda_\pm=\mp\mu(q_n)\|q_{1:n-1}\|$.
From the first equation we obtain
$\mu(q_n) q_{1:n-1}\,\eta
=-(d-\lambda_\pm)y^\pm_{1:n-1}
=\pm\mu(q_n)\|q_{1:n-1}\|\,y^\pm_{1:n-1}$,
so that
$y^\pm_{1:n-1}=\pm(\eta/\|q_{1:n-1}\|)\,q_{1:n-1}$.
Substituting this expression into the second equation yields
$\pm\eta\mu(q_n)\bigl[q_{1:n-1}^T q_{1:n-1}/\|q_{1:n-1}\|
-\|q_{1:n-1}\|\bigr]=0$,
which vanishes identically since
$q_{1:n-1}^T q_{1:n-1}=\|q_{1:n-1}\|^2$.
Choosing $\eta=\|q_{1:n-1}\|$, we obtain
$y^\pm_{1:n-1}=\pm q_{1:n-1}$ and
$h_\pm=(q_{1:n-1},\pm\|q_{1:n-1}\|)$.
\end{itemize}
This concludes the demonstration.
\end{proof}
Let us define an orthonormal decomposition of $\mathbb{R}^n$ according to the eigenspaces of $A$:
$\mathbb{R}^n = \mathcal{E}_d \oplus \mathcal{E}_\pm$,
where $\mathcal{E}_d$ is the $(n-2)$-dimensional degenerate eigenspace associated with
$\lambda_1=\dots=\lambda_{n-2}=d$, and $\mathcal{E}_\pm$ is the two-dimensional plane
associated with the non-degenerate eigenvalues
$\lambda_\pm = d \pm \mu(q_n)\|q_{1:n-1}\|$.
Following Proposition~\ref{prop:avlavtHH}, we introduce an orthonormal basis of
$\mathcal{E}_d$:
$\mathscr{B}_{\mathcal{E}_d} = \bigl\{ (e_{(i)},0)^T \in \mathbb{R}^n :
e_{(i)} \in \mathbb{R}^{n-1},\ \|e_{(i)}\|=1,\ e_{(i)}\cdot q_{1:n-1}=0,\
i=1,\dots,n-2 \bigr\}$,
consisting of vectors orthogonal to $q_{1:n-1}$ and with zero component along
$q_n$.
The orthonormal basis of $\mathcal{E}_\pm$ is given by
$\mathscr{B}_{\mathcal{E}_\pm} =
\left\{
u_{(\pm)} = \frac{(q_{1:n-1}, \pm \|q_{1:n-1}\|)^T}
{\sqrt{2}\,\|q_{1:n-1}\|}
\right\}$,
where each $u_{(\pm)}$ is normalized and spans the plane associated with the
non-degenerate eigenvalues.
Let
\begin{equation}
Q = \bigl[\, e_{(1)}, \dots, e_{(n-2)}, u_{(+)}, u_{(-)} \,\bigr] \in O(n),
\end{equation}
be the orthogonal matrix whose columns are the vectors of the bases defined
above. Then
$Q^T A Q = \mathrm{diag}\bigl(d,\dots,d,\lambda_+,\lambda_-\bigr)$,
and the spectral coordinates are defined as
$x = Q^T q = (x_1, \dots, x_{n-2}, x_+, x_-)^T \in \mathbb{R}^n$,
with
\begin{equation}
\begin{cases}
x_i = e_{(i)}\cdot q, \qquad i=1,\dots,n-2,\\
x_\pm = u_{(\pm)} \cdot q.
\end{cases}
\end{equation}
In these coordinates, the $n$-dimensional potential splits as follows:
\begin{equation}
U(x) =
\frac{\alpha}{2} \sum_{i=1}^{n-2} x_i^2
+ \frac{\beta}{2} \left( x_+^2 + x_-^2 \right)
+ a (x_+ + x_-) \sum_{i=1}^{n-2} x_i^2
+ b (x_+ + x_-)^3.
\end{equation}
The final form of the equations of motion reads
\begin{equation}
\displaystyle
\begin{cases}
\displaystyle
\ddot{x}_i + \alpha x_i + 2 a x_i (x_+ + x_-) = 0,\\
\displaystyle
\ddot{x}_+ + \beta x_+ + a \sum_{i=1}^{n-2} x_i^2 + 3 b \,(x_+ + x_-)^2 = 0,\qquad i=1,\dots,n-2\\
\displaystyle
\ddot{x}_- + \beta x_- + a \sum_{i=1}^{n-2} x_i^2 + 3 b \,(x_+ + x_-)^2 = 0.
\end{cases}
\end{equation}
In this setting, the energy integrals are the canonical energy:
\begin{equation}
\begin{aligned}
E ={}& \frac{1}{2} \sum_{i=1}^{n-2} \dot{x}_i^2
+ \frac{1}{2} (\dot{x}_+^2 + \dot{x}_-^2)
+ \frac{\alpha}{2} \sum_{i=1}^{n-2} x_i^2 +\\
&+ \frac{\beta}{2} (x_+^2 + x_-^2)
+ a (x_+ + x_-) \sum_{i=1}^{n-2} x_i^2
+ b (x_+ + x_-)^3,
\end{aligned}
\end{equation}
and the compatible integral:
\begin{equation}
\begin{aligned}
\tilde{E} ={}& \frac{d}{2} \sum_{i=1}^{n-2}  \dot{x}_i^2
+ \frac{\lambda_+}{2} \dot{x}_+^2
+ \frac{\lambda_-}{2} \dot{x}_-^2
+ \frac{d\alpha}{2} \sum_{i=1}^{n-2} x_i^2 + \frac{\lambda_+ \beta}{2} x_+^2+\\
&
+ \frac{\lambda_- \beta}{2} x_-^2
+ d a \sum_{i=1}^{n-2} x_i^2 (x_+ + x_-)
+ \frac{b}{2}(\lambda_+ + \lambda_-) (x_+ + x_-)^3.
\end{aligned}
\end{equation}

Finally, note that the variables $x_i$ associated with the degenerate block
remain coupled to $x_\pm$, and therefore no separation occurs in the generic
case. This is consistent with Theorem~\ref{th:weak}, which (due to the presence
of degenerate eigenvalues in $\mathcal{E}_d$) requires the matrix $A$ to have constant
coefficients, whereas in the present case $\mu(q_n)$ is not constant.
\subsubsection{Spectral Separability and Parameter Restrictions}
In the general case, the compatible matrix $A$ defined in Eq.~\eqref{eq:A_HH}
depends on both the vector $q_{1:n-1}$ and the coordinate $q_n$ through the
off-diagonal term $\mu(q_n) q_{1:n-1}$.
Since $q_{1:n-1}$ varies along the motion, the matrix $A$ is not constant in
general. Even if $\mu(q_n)$ were constant, the presence of the varying vector $q_{1:n-1}$ would
still prevent our framework. Hence, in this case our theorems
require very restrictive conditions on the coefficients of $A$.

For $A$ to remain constant along the motion, it is necessary that
(i) $q_{1:n-1}$ be constant and (ii) $\mu(q_n)$ be constant.
Condition (i) can be satisfied only if $a=0$, since this eliminates the
$q_{1:n-1}$-dependent terms in the Hessian \eqref{eq:HessHH}, i.e., the cubic
interaction term $q_n \sum_{i=1}^{n-1} q_i^2$ must be absent.
Condition (ii) implies either $b=0$ or $a=0$.

Therefore, the matrix $A$ is constant only in the following cases:

\begin{itemize}
\item ($a\neq 0$, $b=0$). The cubic term $q_n^3$ vanishes, $A$ can be made constant
along appropriate directions, and the system separates into $n-2$ linear
harmonic oscillators plus a two-dimensional integrable H\'enon--Heiles
subsystem, corresponding to the classical cases $1{:}6{:}1$, $1{:}6{:}8$,
$1{:}12{:}16$, etc.
\item ($a=0$, $b\neq 0$). The potential reduces to a sum of independent
harmonic oscillators; $A$ is trivial and the system is separable, but the
H\'enon--Heiles subsystem is absent.
\end{itemize}

In all other cases ($a\neq 0$, $b\neq 0$), both $q_{1:n-1}$ and $\mu(q_n)$ vary
along the motion, the matrix $A$ is non-constant, and spectral separation is
impossible.

\begin{table}[h]
\centering
\label{tab:separation-cases}
\begin{tabular}{|c|l|l|}
\hline
Case & $(a,b)$-parameters & Separation \\ \hline
{\small I} & {\small $a\neq 0$, $b=0$} & {\small HH integrable $+$ $n-2$ harmonic oscillators} \\ \hline
{\small II} & {\small $a=0$, $b\neq 0$} & {\small Harmonic oscillators only, trivial} \\ \hline
{\small III} & {\small $a\neq 0$, $b\neq 0$} & {\small Non-separable, $x_i$ coupled to $x_\pm$} \\ \hline
\end{tabular}
\caption{\small Classification of the system according to the potential parameters
$(a,b)$ and the corresponding separability properties.}
\end{table}

This analysis shows that the requirement that the matrix $A$ be constant
coincides precisely with the classical integrable H\'enon--Heiles cases,
whereas any other choice of parameters leads, within the criteria of our
theory, to a general, non-separable system.


\subsection{An example in $\mathbb{R}^4$ with a transcendental potential}

We illustrate the previous construction on a nontrivial example in $\mathbb{R}^4$ involving a transcendental potential. We consider the natural Lagrangian
\begin{equation}
\mathcal{L}(q,\dot q) = \frac12 \sum_{i=1}^4 \dot q_i^2 - U(q_1,q_2,q_3,q_4),
\end{equation}
where
\begin{equation}
U(q_1,q_2,q_3,q_4)
= e^{q_1}\sin(q_2) + \arctan(q_3) + \ln(1+q_4^2) + q_1 q_3 + q_2 q_4.
\end{equation}

Again, the key requirement is the compatibility relation
$\big[\partial^2_{qq}U,A\big]=0_{4\times4}$.
The structure of $U$ suggests a natural pairing of variables $(q_1,q_3)$ and $(q_2,q_4)$, since the potential contains linear couplings $q_1 q_3$ and $q_2 q_4$. For this reason, we assume a block form for $A$ of the type
\begin{equation}
A =
\begin{pmatrix}
a & 0 & b & 0 \\
0 & d & 0 & e \\
b & 0 & c & 0 \\
0 & e & 0 & f
\end{pmatrix}.
\end{equation}

To ensure that a scalar potential $\widetilde U$ exists, we impose the integrability conditions corresponding to the symmetry of mixed partial derivatives. This yields the constraints
$
a - c = b$ and  $d - f = e.
$
Among the admissible choices, we select the simplest symmetric solution
$a=c=1$, $b=1$, $d=f=1$, $e=1$
which leads to
\begin{equation}
A =
\begin{pmatrix}
1 & 0 & 1 & 0 \\
0 & 1 & 0 & 1 \\
1 & 0 & 1 & 0 \\
0 & 1 & 0 & 1
\end{pmatrix}.
\end{equation}

The next step is to understand the dynamical structure induced by $A$. Each $2\times 2$ block has eigenvalues $2$ and $0$, which suggests introducing coordinates adapted to its eigenspaces. The corresponding orthogonal change of variables is given by
\begin{equation}
x_1 = \frac{q_1+q_3}{\sqrt{2}}, \quad
x_2 = \frac{q_1-q_3}{\sqrt{2}}, \quad
x_3 = \frac{q_2+q_4}{\sqrt{2}}, \quad
x_4 = \frac{q_2-q_4}{\sqrt{2}}.
\end{equation}
In these coordinates, the kinetic matrix becomes diagonal:
\begin{equation}
Q^T A Q = \mathrm{diag}(2,0,2,0),
\end{equation}
We now express the potential in the new coordinates. Substituting the inverse transformation
we obtain a natural decomposition into the $(x_1,x_2)$ and $(x_3,x_4)$ blocks, which share the same eigenvalue.
\begin{equation}
U(q) = H_1(x_1,x_2) + H_2(x_3,x_4),
\end{equation}
where each function depends only on one eigenspace of $A$. Explicitly,
\begin{equation}
H_1(x_1,x_2)
= e^{(x_1+x_2)/\sqrt{2}}
+ \arctan\!\left(\frac{x_1-x_2}{\sqrt{2}}\right)
+ \frac{x_1^2 - x_2^2}{2},
\end{equation}
\begin{equation}
H_2(x_3,x_4)
= \sin\!\left(\frac{x_3+x_4}{\sqrt{2}}\right)
+ \ln\!\left[1+\left(\frac{x_3-x_4}{\sqrt{2}}\right)^2\right]
+ \frac{x_3^2 - x_4^2}{2}.
\end{equation}
This decomposition has a direct dynamical consequence: in the diagonal coordinates, the Euler--Lagrange equations split into two independent subsystems,
\begin{equation}
2 \ddot x_1 + \partial_{x_1}H_1 = 0, \quad
\partial_{x_2} H_1 = 0,\quad
2 \ddot x_3 + \partial_{x_3} H_2 = 0, \quad
\partial_{x_4} H_2 = 0.
\end{equation}
Thus, the full dynamics decomposes into two weakly coupled pairs $(x_1,x_2)$ and $(x_3,x_4)$, each evolving independently of the other.
Finally, each subsystem admits a conserved energy,
\begin{equation}
E_1 = \frac12 \dot x_1^2 + H_1(x_1,x_2), \qquad
E_2 = \frac12 \dot x_3^2 + H_2(x_3,x_4),
\end{equation}
so that the total energy splits as $E = E_1 + E_2$.

This example shows that even highly non-polynomial potentials can be incorporated into the theory.


\section{Addendum: inverse characterization of separable systems with non-symmetric kinetic matrices}
\label{sec:inv}

Let us consider a quadratic Lagrangian function and the associated
Euler--Lagrange equations
\begin{equation}
\label{eq:Lagasym}
\widetilde{\mathcal L}(q,\dot q)
=\frac{1}{2}\sum_{i,j=1}^n A_{ij}\dot q_i\dot q_j-\widetilde U(q),
\qquad
\sum_{j=1}^n A_{ij}\ddot q_j+\partial_{q_i}\widetilde U(q)=0,
\end{equation} 
where $i=1,\dots,n$, $A\in M_n(\mathbb{R})$ is a constant invertible matrix
(not necessarily symmetric), and
$\widetilde U\in C^2(\mathbb{R}^n)$.
The following theorem investigates which class of canonical Lagrangians
generates the same Euler--Lagrange equations in separated form.
\begin{proposition}[Inverse theorem for a non-symmetric kinetic matrix]
Consider the Lagrangian \eqref{eq:Lagasym}.
If there exists a set of univariate functions
$f_i(q_i)\in C^2(\mathbb{R})$ such that
\begin{equation}
\label{eq:condE}
\partial_{q_i}\widetilde U(q)
=
\sum_{j=1}^n A_{ij} f_j'(q_j),
\qquad i=1,\dots,n,
\end{equation}
then there exists a separated Lagrangian
$\mathcal L(q,\dot q)=\frac{1}{2}\sum_{i=1}^n \dot q_i^2-\sum_{i=1}^n f_i(q_i),$
which generates the same Euler--Lagrange equations as
$\widetilde{\mathcal L}$, namely
\begin{equation}
\label{eq:aaa}
\ddot q_i+f_i'(q_i)=0,
\qquad i=1,\dots,n.
\end{equation}
\end{proposition}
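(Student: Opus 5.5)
The argument is a direct substitution of \eqref{eq:condE} into the Euler--Lagrange equations \eqref{eq:Lagasym}, followed by the invertibility of $A$. Neither symmetry of $A$ nor the spectral framework of Definition~\ref{def:spectral} is needed. The diagonalization of Section~\ref{sec:mmm} is unavailable here because $A$ need not be symmetric, but it is also unnecessary.

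First, I would insert \eqref{eq:condE} into the second relation of \eqref{eq:Lagasym}. This gives, for every $i=1,\dots,n$,
\begin{equation*}
\textstyle
\sum_{j=1}^n A_{ij}\,\ddot q_j+\sum_{j=1}^n A_{ij}\,f_j'(q_j)
=\sum_{j=1}^n A_{ij}\big(\ddot q_j+f_j'(q_j)\big)=0 .
\end{equation*}
In vector form this reads $A\,w=0$ with $w_j:=\ddot q_j+f_j'(q_j)$. Since $A$ is constant and invertible, multiplying by $A^{-1}$ yields $w=0$, which is exactly \eqref{eq:aaa}. The reasoning also runs backwards: if \eqref{eq:aaa} holds, multiplying by $A$ and using \eqref{eq:condE} recovers the Euler--Lagrange equations of $\widetilde{\mathcal L}$. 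The two systems therefore have the same solutions, in the same sense as Remark~\ref{rem:sameEoM}.

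Second, I would check that \eqref{eq:aaa} are precisely the Euler--Lagrange equations \eqref{eq:euler-lagrange} of the separated Lagrangian $\mathcal L=\tfrac12\sum_i\dot q_i^2-\sum_i f_i(q_i)$. This holds because $\partial_{\dot q_i}\mathcal L=\dot q_i$ and $\partial_{q_i}\mathcal L=-f_i'(q_i)$. The hypothesis $f_i\in C^2(\mathbb R)$ ensures that $\mathcal L$ is of class $C^2$ in $q$, as required in \eqref{eq:LtL}.

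There is no real analytic obstacle; the only points needing care concern interpretation. The first is that, for non-symmetric $A$, only the symmetric part $\tfrac12(A+A^T)$ enters $\widetilde{\mathcal L}$, so the equations written in \eqref{eq:Lagasym} must be taken as given in the statement. I would follow the statement literally and use invertibility of the matrix $A$ that actually appears there. The second is that \eqref{eq:condE} already presupposes the existence of $\widetilde U$. As a remark, I would note the consistency condition from equality of mixed partials, $A_{ij}f_j''(q_j)=A_{ji}f_i''(q_i)$ for $i\neq j$. This condition forces $f_j''$ to be constant whenever $A_{ij}\neq0$, which anticipates the quadratic effective potentials mentioned in the Introduction. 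It is not needed for the implication being proved.
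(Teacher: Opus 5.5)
Your proof is correct and follows the paper's own argument: substitute \eqref{eq:condE} into the equations in \eqref{eq:Lagasym} to get $\sum_{j} A_{ij}\bigl(\ddot q_j+f_j'(q_j)\bigr)=0$, then use the invertibility of $A$ to conclude that these are exactly the separated equations \eqref{eq:aaa}. Your caveat is well taken: for non-symmetric $A$, only the symmetric part $\tfrac12(A+A^T)$ actually enters $\widetilde{\mathcal L}$, and the paper, like you, simply takes the equations of motion as they are written in \eqref{eq:Lagasym}.
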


\begin{proof}
The Euler--Lagrange equations associated with $\mathcal L$ are
$\ddot q_i+f_i'(q_i)=0$, $i=1,\dots,n$, and are completely separated.
Consider the Euler--Lagrange equations \eqref{eq:Lagasym} for
$\widetilde{\mathcal L}$ and substitute condition~\eqref{eq:condE} into them.
It then follows that the equations can be written as
$
\sum_{j=1}^n A_{ij}\bigl[\ddot q_j+f_j'(q_j)\bigr]=0.
$
Since $A$ is invertible by hypothesis, $\mathcal L$ and
$\widetilde{\mathcal L}$ generate the same separated Euler--Lagrange equations.

Conversely, if the separated system
$\ddot q_i+f_i'(q_i)=0$ generates the same equations as the Lagrangian
$\widetilde{\mathcal L}$, comparison of the equations necessarily yields
condition~\eqref{eq:condE}. This proves the equivalence and completes the proof.
\end{proof}
The next step is to determine the form of the potential $\widetilde U$.
\begin{proposition}
Assume the compatibility condition $A_{ij}\alpha_j = A_{ji}\alpha_i$.
Then the only potential $\widetilde U$ compatible with \eqref{eq:condE} is
\begin{equation}
\label{eq:potentialU}
\widetilde U(q)=
\sum_{i,j=1}^n \big(\alpha_j A_{ij} q_i q_j
+ \beta_j A_{ij} q_i\big) + \mathrm{const},
\end{equation}
where $\alpha, \beta \in \mathbb{R}^n$. Moreover, the corresponding
Euler--Lagrange equations \eqref{eq:aaa} take the form
\begin{equation}
\ddot q_i + 2\alpha_i q_i + \beta_i = 0,
\qquad i = 1, \dots, n.
\end{equation}
\end{proposition}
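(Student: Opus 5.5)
The plan is to read \eqref{eq:condE} as the statement that the vector field $F$ with components $F_i(q):=\sum_{j=1}^n A_{ij} f_j'(q_j)$ is a gradient. On the simply connected domain $\mathbb{R}^n$ this holds exactly when $\partial_{q_k}F_i=\partial_{q_i}F_k$ for all $i,k$, as in the converse part of the proof of Proposition~\ref{eq:commHess}. Since $f_j\in C^2(\mathbb{R})$, we may differentiate once. Because $F_i$ depends on $q_k$ only through $f_k'(q_k)$, we get $\partial_{q_k}F_i=A_{ik}f_k''(q_k)$, and the integrability conditions read
\begin{equation*}
A_{ik}\,f_k''(q_k)=A_{ki}\,f_i''(q_i),\qquad i\neq k .
\end{equation*}

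The key step is a separation-of-variables argument on this identity. For fixed $i\neq k$, the left-hand side depends only on $q_k$ and the right-hand side only on $q_i$, so both sides equal a common constant. Whenever $A_{ik}\neq0$ for some $i\neq k$, this forces $f_k''$ to be constant, and we write $f_k''\equiv 2\alpha_k$. Integrating twice gives $f_k(q_k)=\alpha_k q_k^2+\beta_k q_k+\mathrm{const}$, and the identity itself becomes $A_{ik}\alpha_k=A_{ki}\alpha_i$. This is precisely the compatibility condition assumed in the statement, and the argument shows it is also necessary.

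\emph{Main obstacle.} Indices $k$ whose column of $A$ has vanishing off-diagonal entries, i.e.\ $A_{ik}=0$ for all $i\neq k$, leave $f_k''$ unconstrained. For such $k$, $f_k$ is not forced to be quadratic. I would handle this by stating that the proposition implicitly presupposes that each $f_k$ has constant second derivative $2\alpha_k$, which is how the $\alpha_k$ entering the compatibility hypothesis are defined. Alternatively, one can assume every column of $A$ has a nonzero off-diagonal entry, so that the quadratic form is forced for every $k$. I would flag this degenerate case in a remark rather than hide it.

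With the quadratic form of $f_k$ in hand, $F_i=\sum_j A_{ij}(2\alpha_j q_j+\beta_j)$. It remains to check directly that the candidate \eqref{eq:potentialU} has this gradient. Differentiating the quadratic part with respect to $q_k$ gives $\sum_j\alpha_jA_{kj}q_j+\sum_i\alpha_kA_{ik}q_i$. By the compatibility condition $\alpha_kA_{ik}=\alpha_iA_{ki}$, the second sum equals the first, so the total is $2\sum_jA_{kj}\alpha_jq_j$. The linear part contributes $\sum_jA_{kj}\beta_j$, so $\nabla\widetilde U=F$. Uniqueness up to an additive constant follows because two potentials with the same gradient on the connected set $\mathbb{R}^n$ differ by a constant.

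Finally, substituting $f_i'(q_i)=2\alpha_iq_i+\beta_i$ into \eqref{eq:aaa} gives $\ddot q_i+2\alpha_iq_i+\beta_i=0$. Equivalently, this follows from the preceding proposition after cancelling the invertible matrix $A$.
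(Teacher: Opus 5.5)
Your proposal follows the paper's proof: both treat \eqref{eq:condE} as requiring the field with components $\sum_j A_{ij}f_j'(q_j)$ to be conservative, reduce this to $A_{ik}f_k''(q_k)=A_{ki}f_i''(q_i)$ for $i\neq k$, separate variables to force $f_k''=2\alpha_k$, and then integrate. The degenerate case you flag is a genuine defect of the statement, which the paper's proof passes over when it asserts that $f_i''$ is constant for every $i$: for diagonal $A$, the potential $\widetilde U=\sum_i A_{ii}f_i(q_i)$ satisfies \eqref{eq:condE} for arbitrary $f_i\in C^2(\mathbb{R})$, so your extra hypothesis is needed; your explicit gradient check also shows where $A_{ik}\alpha_k=A_{ki}\alpha_i$ enters, a step the paper's ``integrating once more'' leaves implicit.
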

\begin{proof}
Condition \eqref{eq:condE} can be written in vector form as
$
\nabla \widetilde U(q) = A F(q),
$ with
$
F(q) = (f_1'(q_1), \dots, f_n'(q_n))^T.
$
For such a potential $\widetilde U$ to exist, the vector field $A F(q)$ must be conservative.
Hence, the integrability condition
$
\partial_{q_k} [ \sum_j A_{ij} f_j'(q_j) ]
=
\partial_{q_i} [ \sum_j A_{kj} f_j'(q_j)]
$
must hold for all $i,k$.
Since each $f_j'$ depends only on $q_j$, for $i \neq k$ this reduces to
$
A_{ik} f_k''(q_k) = A_{ki} f_i''(q_i).
$
The left-hand side depends only on $q_k$, while the right-hand side depends only on $q_i$; therefore, both must be constant. It follows that $f_i''$ is constant for every $i$, and thus
$
f_i'(q_i) = 2\alpha_i q_i + \beta_i.
$
Integrating, we obtain
$
f_i(q_i) = \alpha_i q_i^2 + \beta_i q_i + \mathrm{const}.
$
Substituting into \eqref{eq:condE} and integrating once more yields \eqref{eq:potentialU}.
\end{proof}

This proposition provides a complete inverse characterization, identifying precisely when a system with non-symmetric kinetical couplings is dynamically equivalent to a separated one.
An immediate consequence is that the only admissible separated potentials are quadratic. 
Therefore, the class of separable non-symmetric Lagrangians corresponds to systems whose effective dynamics reduces to a collection of independent harmonic oscillators after a suitable transformation.


\nocite{*}
\bibliographystyle{rendiconti}
\bibliography{bibliography}

@article{Stackel1893,
  author  = {St{\"a}ckel, Paul},
  title   = {{\"U}ber die Bewegung eines Punktes in einer n-fachen Mannigfaltigkeit},
  journal = {Mathematische Annalen},
  volume  = {42},
  pages   = {537--563},
  year    = {1893}
}

@article{LeviCivita1904,
  author  = {Levi-Civita, Tullio},
  title   = {Sulla integrazione della equazione di {Hamilton--Jacobi} per separazione di variabili},
  journal = {Mathematische Annalen},
  volume  = {59},
  pages   = {383--397},
  year    = {1904}
}

@article{Eisenhart1934,
  author  = {Eisenhart, Luther Pfahler},
  title   = {Separable systems of {St{\"a}ckel}},
  journal = {Annals of Mathematics},
  volume  = {35},
  number  = {2},
  pages   = {284--305},
  year    = {1934}
}

@article{Woodhouse1975,
  author  = {Woodhouse, N. M. J.},
  title   = {Killing tensors and the separation of the {Hamilton--Jacobi} equation},
  journal = {Communications in Mathematical Physics},
  volume  = {44},
  pages   = {9--38},
  year    = {1975}
}

@article{KalninsMiller1980,
  author  = {Kalnins, E. G. and Miller, Willard},
  title   = {Killing tensors and variable separation for {Hamilton--Jacobi} and {Helmholtz} equations},
  journal = {SIAM Journal on Mathematical Analysis},
  volume  = {11},
  pages   = {1011--1026},
  year    = {1980}
}

@article{Benenti1997,
  author  = {Benenti, Sergio},
  title   = {Intrinsic characterization of the variable separation in the {Hamilton--Jacobi} equation},
  journal = {Journal of Mathematical Physics},
  volume  = {38},
  number  = {12},
  pages   = {6578--6602},
  year    = {1997}
}

@article{BenentiChanuRastelli2001,
  author  = {Benenti, Sergio and Chanu, Claudia and Rastelli, Giovanni},
  title   = {Variable separation for natural Hamiltonians with scalar and vector potentials on Riemannian manifolds},
  journal = {Journal of Mathematical Physics},
  volume  = {42},
  number  = {5},
  pages   = {2065--2091},
  year    = {2001}
}

@article{BenentiChanuRastelli2005,
  author  = {Benenti, Sergio and Chanu, Claudia and Rastelli, Giovanni},
  title   = {Variable-separation theory for the null {Hamilton--Jacobi} equation},
  journal = {Journal of Mathematical Physics},
  volume  = {46},
  pages   = {042901},
  year    = {2005}
}

@article{SarletRamos2000,
  author  = {Sarlet, Willy and Ramos, Ant{\'o}nio},
  title   = {Adjoint symmetries, separability, and volume forms},
  journal = {Journal of Mathematical Physics},
  volume  = {41},
  pages   = {2877--2888},
  year    = {2000}
}

@article{PrinceSarletThompson2002,
  author  = {Prince, G. E. and Sarlet, Willy and Thompson, G.},
  title   = {The inverse problem of the calculus of variations: The use of geometrical calculus in Douglas's analysis},
  journal = {Transactions of the American Mathematical Society},
  volume  = {354},
  number  = {7},
  pages   = {2897--2919},
  year    = {2002}
}

@article{TopalovMatveev2003,
  author  = {Topalov, Peter and Matveev, Vladimir S.},
  title   = {Geodesic equivalence via integrability},
  journal = {Geometriae Dedicata},
  volume  = {96},
  pages   = {91--115},
  year    = {2003}
}

@article{SawadaKotera1974,
  author  = {Sawada, K. and Kotera, T.},
  title   = {A Method for Finding $N$-Soliton Solutions of the KdV Equation and KdV-Like Equation},
  journal = {Progress of Theoretical Physics},
  volume  = {51},
  pages   = {1355--1367},
  year    = {1974}
}

@article{HenonHeiles1964,
  author  = {H{\'e}non, M. and Heiles, C.},
  title   = {The Applicability of the Third Integral of Motion},
  journal = {Astronomical Journal},
  volume  = {69},
  pages   = {73--79},
  year    = {1964}
}

@article{GorniScomparinZampieri2024,
  author  = {Gorni, Gianluca and Scomparin, Mattia and Zampieri, Gaetano},
  title   = {A New Class of Separable Lagrangian Systems Generalizing the Sawada--Kotera System},
  journal = {Dynamics},
  volume  = {4},
  number  = {3},
  pages   = {499--505},
  year    = {2024},
  doi     = {10.3390/dynamics4030026}
}

@article{BlaszakRauchWojciechowski1994,
  author  = {B{\l}aszak, M. and Rauch-Wojciechowski, S.},
  title   = {A generalized H{\'e}non--Heiles system and related integrable Newton equations},
  journal = {Journal of Mathematical Physics},
  volume  = {35},
  pages   = {1693--1709},
  year    = {1994},
  doi     = {10.1063/1.530565}
}

@book{Zhang2005,
  author    = {Zhang, Fuzhen},
  title     = {The Schur Complement and Its Applications},
  series    = {Numerical Methods and Algorithms},
  volume    = {4},
  publisher = {Springer},
  address   = {New York},
  year      = {2005},
  doi       = {10.1007/b105056}
}

@article{FuchssteinerOevel1982,
  author    = {Benno Fuchssteiner and Walter Oevel},
  title     = {The bi-Hamiltonian structure of some nonlinear fifth- and seventh-order differential equations and recursion formulas for their symmetries and conserved covariants},
  journal   = {Journal of Mathematical Physics},
  volume    = {23},
  number    = {3},
  pages     = {358--363},
  year      = {1982},
  doi       = {10.1063/1.525376}
}

@article{Fordy1983,
  author    = {A. P. Fordy and B. Dorizzi and B. Grammaticos},
  title     = {Painlev\'e problem for coupled ordinary differential equations},
  journal   = {Physical Review Letters},
  volume    = {50},
  number    = {6},
  pages     = {374--377},
  year      = {1983},
  doi       = {10.1103/PhysRevLett.50.374}
}

@article{CosgroveScoufis1993,
  author    = {C. M. Cosgrove and G. Scoufis},
  title     = {Painlev\'e classification of a class of differential equations of the second order and second degree},
  journal   = {Studies in Applied Mathematics},
  volume    = {88},
  number    = {1},
  pages     = {25--87},
  year      = {1993},
  doi       = {10.1002/sapm199388125}
}


\end{document}